\documentclass[journal,twoside,web,xcdraw,svgnames]{ieeecolor}
\usepackage{generic}
\usepackage{cite}
\usepackage{amsmath,amssymb,amsfonts}

\usepackage{hyperref}

\usepackage[T1]{fontenc}
\usepackage[utf8]{inputenc}
\usepackage{graphicx}
\usepackage{amsmath,systeme,amssymb,amsfonts}
\usepackage[version=4]{mhchem}
\usepackage{siunitx}
\usepackage{longtable,tabularx}
\usepackage{comment}
\usepackage{float}
\usepackage[utf8]{inputenc}
\usepackage{tikz, pgfplots}
\usepackage{textcomp}
\usepackage{tcolorbox} % for colored boxes
\usepackage{graphicx}
\usepackage{amsmath}

\usepackage[version=4]{mhchem}
\usepackage{siunitx}
\usepackage{longtable,tabularx}
\usepackage{tikz}
\usetikzlibrary{positioning, shapes, arrows}
\usepackage{thmtools}
\usepackage{textcomp}
\usepackage{soul}
\usepackage{float,amsfonts,amsthm,color}
\usepackage{amssymb}
\usepackage{graphics} % for pdf, bitmapped graphics files
\usepackage{booktabs} % For formal tables
\usepackage{caption, subcaption}
\usepackage{breqn}
\usepackage{array, mathtools}
\usepackage{pgfplots}
\usepackage{siunitx}
\usepackage{algorithm}
\usepackage[noend]{algpseudocode}
\usepackage{tabulary}
\usepackage{stmaryrd} % for double brackets
\usepackage{varwidth}
\usepackage{comment}
\usetikzlibrary{positioning}

\newtheorem{theorem}{Theorem}
\newtheorem{definition}{Definition}

\newtheorem{proposition}{Proposition}
\newtheorem{lemma}{Lemma}
\newtheorem{remark}{Remark}
\newtheorem{corollary}{Corollary}
\newtheorem{example}{Example}

\def\BibTeX{{\rm B\kern-.05em{\sc i\kern-.025em b}\kern-.08em
    T\kern-.1667em\lower.7ex\hbox{E}\kern-.125emX}}
\begin{document}
\title{Can a Learner Regret Using a No–Regret Algorithm? A Control–Theoretic Study of Performance Dominance}
\author{Hassan Abdelraouf, \IEEEmembership{Member, IEEE}, and Jeff S. Shamma, \IEEEmembership{Fellow, IEEE}
\thanks{Hassan Abdelraouf is with the Department of Aerospace Engineering, University of Illinois at Urbana-Champaign, USA. (e-mail: hassana4@illinois.edu) }
\thanks{Jeff S. Shamma is with the Department of Industrial and Enterprise Systems Engineering, University of Illinois at Urbana-Champaign, USA. (e-mail: jshamma@illinois.edu) }}

\maketitle

\begin{abstract}
No-regret learning dynamics ensure that a learner asymptotically achieves an average reward no worse than that of any fixed strategy. This no-regret guarantee does not determine the value of the asymptotic average reward. Indeed, it is possible for different no-regret learning dynamics to exhibit different asymptotic average rewards when facing the same environment while both assure the no-regret guarantee. This paper asks whether a ``free-lunch'' phenomenon can arise among no-regret algorithms. Namely, is it possible for one no-regret learning rule to uniformly outperform another no-regret learning rule across all payoff environments. Stated differently, can a learner regret not using a particular no-regret algorithm? We consider generalized replicator dynamics (RD) as a cascade interconnection between a linear time-invariant (LTI) system and the softmax nonlinearity. Varying this LTI system leads to different realizations of replicator dynamics, including so-called anticipatory RD, exponential RD, and other forms of higher-order RD. Setting the LTI system to be an integrator realizes standard RD, which is known to satisfy the no-regret property. Within this framework, we analyze and compare various realizations of these generalized realizations RD by varying the LTI system. We first formulate performance comparison as a passivity property of an associated comparison system and establish ``local’’ dominance results, i.e., comparing the asymptotic performance near an equilibrium payoff vector. We then cast performance comparison between a form of anticipatory RD and standard RD as an optimal-control problem. We show that the minimal achievable cumulative reward gap is zero, thereby establishing global dominance of anticipatory RD across all payoff environments and establishing a ``free lunch’’ among no-regret learning dynamics.

\end{abstract}

\begin{IEEEkeywords}
Online learning, no--regret algorithms, replicator dynamics, passivity,
performance dominance. 
\end{IEEEkeywords}
%%%%%%%%%%%%%%%%%%%%%%%%%%%%%%%%%%%%%%%%%%%%%%%%%%%%
\section{Introduction}
\label{sec:introduction}
As autonomous AI agents become more widely deployed across dynamic, multi-agent environments, they will continuously learn and interact in real time to achieve complex goals. In this setting, online learning has emerged as a powerful framework for adaptive decision-making in uncertain, competitive environments. A learning agent selects actions without a full model, observes feedback, and updates its strategy  in real time. This approach is widely used in online recommendation systems \cite{hazan2016introduction}, traffic routing and congestion management \cite{gollapudi2023online}, network resource allocation \cite{chen2017online}, and market prediction \cite{lesage2020predictive}. Performance is typically measured by regret, the difference between the learner’s cumulative utility and that of the best fixed strategy in hindsight. An algorithm is said to be \emph{no-regret} if its regret grows sublinearly in time, so that the average regret vanishes asymptotically \cite{hazan2007logarithmic,shalev2012online,zhang2013online}. A broad family of first-order methods has this guarantee, including online mirror descent (OMD) \cite{shalev2006convex,yuan2018optimal}, follow-the-regularized-leader (FTRL) \cite{shalev2007primal}, follow-the-perturbed-leader (FTPL)  \cite{kalai2005efficient}, multiplicative weight updates (MWU) \cite{freund1997decision,arora2012multiplicative}, EXP3 \cite{auer2002nonstochastic} , and online gradient descent (OGD) \cite{zinkevich2003online}.

When the learner interacts with other adaptive learners, each agent’s utility depends not only on its own action but also on the actions of others. \emph{Game theory} provides the mathematical framework to model and analyze such strategic interactions \cite{roughgarden2010algorithmic,myerson2013game}. In this context, learning in games refers to adaptive procedures in which each agent updates its strategy based on past observations (realized payoffs and opponents’ actions) and then plays the updated strategy in the next round \cite{fudenberg1998theory,young2004strategic}. A foundational result in this literature is that if every player runs a \emph{no-regret} algorithm, then the empirical distribution of joint play converges to the set of \emph{coarse correlated equilibria} (CCE); equivalently, the time-averaged play approaches CCE without centralized coordination \cite{foster1997calibrated,hart2000simple,cesa2006prediction}.
These guarantees help explain the success of no-regret learning in deployed multi-agent systems, including distributed sensor coverage and target assignment \cite{arslan2004distributed}, multi-robot coordination and consensus \cite{marden2009cooperative}, and distributed energy and resource management \cite{marden2013model}. 

To move from equilibrium outcome to the process by which adaptive learners reach them, it is natural to adopt a dynamical-systems perspective. Within this perspective, passivity is a fundamental input--output property that captures energy conservation and dissipation in mechanical and electrical systems \cite{willems1972dissipative}. In the online learning setting, a learning algorithm can be viewed as an input--output operator mapping payoff signals to strategy trajectories. This perspective enables the use of passivity--based
tools for both analysis and design.

Building on this perspective, \cite{fox2013population} introduced the notion of $\delta$--passivity to analyze convergence of learning dynamics to Nash equilibria in contractive games, modeling learning in games as a feedback interconnection between a \emph{learning operator} (payoff $\to$ strategy) and a \emph{game operator} (strategy $\to$ payoff). Subsequent works \cite{gao2020passivity,gao2023second} employed equilibrium--independent passivity to study convergence in zero--sum games and to design higher--order learning dynamics that converge in settings where first--order models fail. More recently, cite{abdelraouf2025passivity} proposed a passivity--based classification of learning dynamics according to the notion of passivity they satisfy, incremental passivity, equilibrium--independent passivity, and $\delta$-passivity, and used it as a
unifying framework to analyze convergence in contractive games. Moreover, \cite{martins2025counterclockwise} employs counterclockwise
dissipativity to study convergence in potential games.

Going beyond convergence guarantees, the primary objective of a learning agent is to maximize its cumulative reward over time. Finite regret ensures that, in the long run, a learning dynamics achieves an average reward no worse than that of any fixed strategy in hindsight. Building on this perspective, \cite{abdelraouf2025passivity} establishes a rigorous equivalence between finite regret and a passivity property: passivity from the payoff signal to the
deviation of the evolving strategy from any fixed strategy in the simplex is equivalent to finite regret.

This result raises a fundamental question: is finite regret alone sufficient to assess the performance of learning dynamics as measured by asymptotic average reward? In particular, when two learning models both satisfy no--regret guarantees, does one  outperform the other in the long run? Equivalently, does there exist a ``free lunch'' among no--regret algorithms, namely, a learning rule that achieves uniformly higher average reward across all payoff environments? If so, a learner may incur regret not for violating no--regret, but for selecting an suboptimal no--regret algorithm. Addressing this question is the central focus of this paper. The main contributions are summarized as follows:

\begin{itemize}
    \item We provide explicit counterexamples demonstrating that finite regret
alone is not sufficient to assess the performance of learning dynamics, by
showing environments in which finite--regret dynamics outperform dynamics
without finite regret, and vice versa (Section~\ref{sec:motivation}).

   \item We model payoff--based higher--order replicator dynamics as a cascade
    interconnection between a diagonal LTI system $G(s)=g(s)I_n$ and the softmax
    mapping, and show that anticipatory replicator dynamics is dynamically
    equivalent to predictive replicator dynamics equipped with a first--order
    low--pass predictor (Section~\ref{sec:pay_off_based_higher_order_RD}).
    
   \item We introduce an oracle replicator dynamics (predictive RD with perfect
    prediction) and prove that it uniformly dominates standard replicator
    dynamics for all payoff trajectories (Section~\ref{sec:oracle_RD}).

   \item We provide a frequency--domain interpretation of performance for
higher--order replicator dynamics, modeled as a cascade interconnection between
a diagonal LTI system $g(s)I_n$ and the softmax mapping, and establish rigorous
links between the asymptotic average reward of the learning dynamics and the
frequency response $g(j\omega)$. In sinusoidal environments, we show that:
(i) any learning dynamics with passive $g(s)$ outperforms standard replicator
dynamics; (ii) for two learning models with $g_1(j\omega)$ and $g_2(j\omega)$, if \(|g_1(j\omega)|=|g_2(j\omega)|\), the one with higher \(\cos(\arg(g(j\omega)))\) has better asymptotic average reward; and (iii) if $g_1(j\omega)$ and $g_2(j\omega)$ have the same phase, the one with higher magnitude has better performance.  (Section \ref{sec:frequency_domain_intuition}). 
    
   \item We show that the predictive exponential replicator dynamics equipped with a
first--order low--pass predictor uniformly dominates the standard exponential
replicator dynamics for arbitrary payoff trajectories
(Section~\ref{sec:predictive_Ex_RD_and_Ex_RD}).

   \item We Formulate the performance comparison between two learning models as a passivity problem, i.e., the passivity of the associated comparison dynamical system is equivalent to the uniform dominance of one model over the other. Using this framework, we show that predictive RD with any passive asymptotically stable predictor, including the anticipatory replicator dynamics, locally dominates the standard replicator dynamics (Subsection \ref{subsec:performance_comparison_local}). 
  
   \item We cast the global comparison between the anticipatory replicator dynamics  and the standard replicator dynamics as an optimal--control problem and show that the minimal
    cumulative reward gap is zero, thereby establishing global uniform dominance
    of anticipatory replicator dynamics over standard replicator dynamics for
    arbitrary payoff trajectories. This constitutes the main ``free lunch''
    result of the paper (Subsection~\ref{subsec:performance_comparison_global}).
   
\end{itemize}

%%%%%%%%%%%%%%%%%%%%%%%%%%%%%%%%%%%%%%%%%%%%%%%%%%%
\section{Preliminaries}
%%%%%%% Notations %%%%%%%%%%%%%%%%%%%%
\subsection{Notations}
Let $\mathbb{R}_+$ denote the set of nonnegative real numbers.
For a vector $x\in\mathbb{R}^n$, $x_i$ denotes its $i$-th component and
$[x]_+$ denotes its componentwise nonnegative part, i.e.,
$([x]_+)_i=\max\{x_i,0\}$.
The Euclidean inner product and norm are denoted by
$\langle x,y\rangle=x^\top y$ and $\|x\|_2$, respectively.
The matrix $\operatorname{diag}(x)$ denotes the diagonal matrix with diagonal
entries $x_1,\dots,x_n$. We write $I_n$ for the $n\times n$ identity matrix,
$\mathbf{1}_n$ for the vector of all ones, and $\mathbf{0}_n$ for the zero vector. The probability simplex in $\mathbb{R}^n$ is
\[
\Delta_n \;\coloneqq\; \bigl\{ s\in\mathbb{R}^n \;:\; s_i\ge 0 \ \forall i,\ \mathbf{1}_n^\top s = 1 \bigr\},
\]
and $\operatorname{Int}(\Delta_n)$ denotes its interior.
The vector $e_i\in\Delta_n$ denotes the $i$-th vertex of the simplex, i.e.,
$(e_i)_i=1$ and $(e_i)_j=0$ for $j\neq i$.
For a closed convex set $C\subset\mathbb{R}^n$, $\Pi_C(x)$ denotes the Euclidean
projection of $x$ onto $C$,
\[
\Pi_C(x) \;\coloneqq\; \arg\min_{s\in C} \|x-s\|_2 .
\]
The softmax mapping $\sigma:\mathbb{R}^n\to\operatorname{Int}(\Delta_n)$ is defined as
\[
\sigma(x) \;\coloneqq\; \frac{\exp(x)}{\sum_{i=1}^n \exp(x_i)},
\]
and the log-sum-exp function $\operatorname{lse}:\mathbb{R}^n\to\mathbb{R}$ is given by
\[
\operatorname{lse}(x) \;\coloneqq\; \log\!\Bigl(\sum_{i=1}^n e^{x_i}\Bigr).
\]
For probability vectors $x,y\in\Delta_n$, the Kullback--Leibler (KL) divergence
is defined as
\[
D_{\mathrm{KL}}(x\|y)
\;\coloneqq\;
\sum_{i=1}^n x_i \ln\!\left(\frac{x_i}{y_i}\right).
\]

The space of square-integrable functions on finite intervals is denoted by
$\mathbb{L}_e$, i.e.,
\[
\mathbb{L}_e \;\coloneqq\;
\Bigl\{ f:\mathbb{R}_+\to\mathbb{R}^n \;:\;
\int_0^T f(t)^\top f(t)\,dt < \infty \;\; \forall T>0 \Bigr\}.
\]
For $f,g\in\mathbb{L}_e$, the truncated inner product over $[0,T]$ is
\[
\langle f,g\rangle_T \;\coloneqq\; \int_0^T f(t)^\top g(t)\,dt,
\]
and the induced $\mathbb{L}_2$-norm on $[0,T]$ is
\[
\|f\|_{2,[0,T]}
\;\coloneqq\;
\Bigl(\int_0^T \|f(t)\|_2^2\,dt\Bigr)^{1/2}.
\]
For a stable transfer function $g(s)$, $\|g\|_\infty$ denotes its
$\mathcal{H}_\infty$ norm,
\[
\|g\|_\infty \;\coloneqq\; \sup_{\omega\in\mathbb{R}} |g(j\omega)|.
\]
%%%%%%%%%%% Online Learning %%%%%%%%%%%%%%%%%%%%%%%%%%%%%%%
\subsection{Online Learning}
We consider continuous–time online learning dynamics in which the payoff vector
$p(\cdot):\mathbb{R}_+\to\mathbb{R}^n$ is assumed to be continuously differentiable.
Our primary focus is on the \emph{replicator dynamics} (RD)
\cite{mertikopoulos2010emergence,hofbauer2009time}, which is the
continuous–time counterpart of multiplicative–weights algorithms
\cite{cheung2018multiplicative,palaiopanos2017multiplicative}.

In RD, the learning process starts  at time $t=0$ with an initial score vector $z(0)\in\mathbb{R}^n$. For $t>0$, the learner accumulates payoffs according to
\(
z(t)=z(0)+\int_0^t p(\tau)\,d\tau,
\)
then selects a mixed strategy $x(t)\in\Delta_n$ via the softmax mapping
\(
x(t)=\sigma\bigl(z(t)\bigr).
\)
Equivalently, the learning process is governed by the dynamical system
\begin{equation}
\label{eq:RD}
\begin{aligned}
\dot z &= p,\\
x &= \sigma(z).
\end{aligned}
\end{equation}
This formulation has been extensively studied in evolutionary game theory and
online learning; see, e.g., \cite{mertikopoulos2016learning,mertikopoulos2017convergence}.

From an input–output perspective, RD can be viewed as an operator mapping the
payoff signal $p(\cdot)$ to the strategy trajectory $x(\cdot)$. In particular,
\eqref{eq:RD} can be represented as a cascade interconnection
between the integrator $\tfrac{1}{s}I_n$ and the static nonlinearity $\sigma(\cdot)$,
as shown in Fig.~\ref{fig:rd_block}.

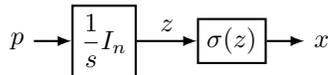
\begin{figure}[h]
\centering
\begin{tikzpicture}[>=latex, thick, node distance=2.5cm]
\tikzstyle{block} = [draw, rectangle, minimum height=0.5cm, minimum width=0.8cm, align=center]
\node (p) {$p$};
\node [block, right=0.5cm of p] (G) {$\dfrac{1}{s}I_n$};
\node [block, right=0.8cm of G] (C) {$\sigma(z)$};
\node [right=0.5cm of C] (x) {$x$};
\draw[->] (p) -- (G);
\draw[->] (G) -- node[above] {$z$} (C);
\draw[->] (C) -- (x);
\end{tikzpicture}
\caption{Block–diagram representation of replicator dynamics.}
\label{fig:rd_block}
\end{figure}

Several alternative continuous–time learning dynamics have been proposed in the
literature. Notable examples include:

\begin{enumerate}
\item \emph{Brown–von Neumann–Nash (BNN) dynamics} \cite{brown1950solutions}:
\begin{equation}
\dot{x}_i = [p_i-x^\top p]_+ - x_i \sum_{j=1}^n [p_j-x^\top p]_+ .
\label{eq:BNN}
\end{equation}

\item \emph{Smith dynamics} \cite{smith1984stability}:
\begin{equation}
\dot{x}_i = \sum_{j=1}^n x_j [p_i-p_j]_+ - x_i \sum_{j=1}^n [p_j-p_i]_+ .
\label{eq:Smith}
\end{equation}

\item \emph{Target projection (TP) dynamics} \cite{tsakas2009target}:
\begin{equation}
\dot{x} = -x+\Pi_{\Delta_n}(x+p).
\label{eq:TP}
\end{equation}

\item \emph{Exponential replicator dynamics (Ex–RD)} \cite{gao2020passivity}:
\begin{equation}
\label{eq:EX-RD}
\begin{aligned}
\dot z &= \lambda(p-z),\\
x &= \sigma(z),
\end{aligned}
\end{equation}
where $\lambda>0$ is a design parameter.
\end{enumerate}

 Compared to standard RD, the score dynamics in \eqref{eq:EX-RD} incorporate an exponential forgetting mechanism, yielding a first–order low–pass filter between the payoff and the score. Accordingly, Ex–RD has a block–diagram representation as the cascade interconnection of the stable transfer function $\tfrac{\lambda}{s+\lambda}I_n$ with the softmax mapping, as shown in Fig.~\ref{fig:exrd_block}.

\begin{figure}[h]
\centering
\begin{tikzpicture}[>=latex, thick, node distance=2.5cm]
\tikzstyle{block} = [draw, rectangle, minimum height=0.5cm, minimum width=0.8cm, align=center]
\node (p) {$p$};
\node [block, right=0.5cm of p] (G) {$\dfrac{\lambda}{s+\lambda}I_n$};
\node [block, right=0.8cm of G] (C) {$\sigma(z)$};
\node [right=0.5cm of C] (x) {$x$};
\draw[->] (p) -- (G);
\draw[->] (G) -- node[above] {$z$} (C);
\draw[->] (C) -- (x);
\end{tikzpicture}
\caption{Block–diagram representation of exponential replicator dynamics (Ex–RD).}
\label{fig:exrd_block}
\end{figure}
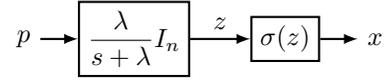
%%%%%%%%%%%%%%%% Regret %%%%%%%%%%%%%%%%%%%%%%%%
\subsection{Regret}
In continuous time, the regret of a learning dynamics with respect to a fixed
strategy $\bar{x}\in\Delta_n$ over a horizon $[0,T]$ is defined as
\[
R_T(\bar{x})
\;\coloneqq\;
\int_0^T p(t)^\top\bigl(\bar{x}-x(t)\bigr)\,dt,
\]
i.e., the difference between the cumulative reward obtained by always selecting 
$\bar{x}$ and that achieved by the trajectory $x(\cdot)$ generated by the
learning dynamics.

A learning dynamics is said to be \emph{no--regret} if
\[
\sup_{\bar{x}\in\Delta_n} R_T(\bar{x}) = o(T),
\]
equivalently, if the time--average regret $R_T(\bar{x})/T$ converges to zero as
$T\to\infty$ for every $\bar{x}\in\Delta_n$.

The dynamics is said to have \emph{finite regret} if there exists
a constant $\alpha>0$ such that, for every payoff trajectory $p(\cdot)$ and all
$T>0$,
\[
\sup_{\bar{x}\in\Delta_n} R_T(\bar{x}) \;\le\; \alpha .
\]
%%%%%%%%%%%%%% Passivity %%%%%%%%%%%%%%%%%%%%%%%%%%%%%
\subsection{Passivity}
Passivity theory provides a powerful framework for analyzing stability and
performance of interconnected dynamical systems \cite{lozano2013dissipative}. In particular, negative feedback interconnections of passive systems are guaranteed to be stable, a property that will play a central role in our analysis.

\subsubsection{Input--output operators}

A dynamical system can be viewed as an input--output operator
$H:\mathbb{U}\to\mathbb{Y}$, where $\mathbb{U},\mathbb{Y}\subset\mathbb{L}_e$
denote admissible input and output spaces, respectively.
The operator $H$ is said to be \emph{passive} if there exists a constant
$\alpha\in\mathbb{R}$ such that
\begin{equation}
\label{eq:io_passivity}
\langle Hu,u\rangle_T \;\ge\; \alpha,
\qquad
\forall\,u\in\mathbb{U},\; \forall\,T\in\mathbb{R}_+ .
\end{equation}
If the above inequality holds with $\alpha>0$, the operator is said to be
\emph{strictly passive}.

\subsubsection{State--space representation}

Consider a nonlinear system in state--space form
\begin{equation}
\label{eq:state_space_model}
\begin{aligned}
\dot{x} &= f(x,u), \qquad x(0)=x_0,\\
y &= h(x,u),
\end{aligned}
\end{equation}
where $x\in\mathbb{R}^n$ is the state, $u(t)\in\mathbb{R}^m$ is the input, and
$y(t)\in\mathbb{R}^m$ is the output.  
The system \eqref{eq:state_space_model} is passive if there exists a continuously
differentiable storage function $V:\mathbb{R}^n\to\mathbb{R}_+$ such that, for all
initial conditions $x(0)$ and all $T\in\mathbb{R}_+$,
\begin{equation}
\label{eq:ss_passivity}
V(x(T))-V(x(0))
\;\le\;
\int_0^T u(t)^\top y(t)\,dt .
\end{equation}
Equivalently, along trajectories of \eqref{eq:state_space_model},
\begin{equation}
\dot V(x)
=\nabla V(x)^\top f(x,u)
\;\le\;
u^\top y .
\end{equation}

\begin{remark}
Consider the linear time--invariant (LTI) system
\(
\dot{x}=Ax+Bu,\quad y=Cx+Du,
\)
with transfer function
$H(s)=C(sI-A)^{-1}B+D$.
The system is passive if and only if
\(
H(j\omega)+H(j\omega)^* \succeq 0
\quad \forall\,\omega\in\mathbb{R},
\)
where $(\cdot)^*$ denotes the conjugate transpose.
It is \emph{strictly passive} if there exists $\delta>0$ such that
\(
H(j\omega)+H(j\omega)^* \succeq \delta I
\quad \forall\,\omega\in\mathbb{R}.
\)

In the single--input single--output (SISO) case, passivity is equivalent to
$\operatorname{Re}\{H(j\omega)\}\ge 0$ for all $\omega\in\mathbb{R}$; such systems
are said to be positive real.
\end{remark}
%%%%%%%%%%%%%%%%%%%%%%%%%%%%%%%%%%%%%%%%%%%%%%%%%%%%%%%%%%%%%%
\subsection{Properties of the Softmax Mapping}
We recall several fundamental properties of the softmax mapping $\sigma(\cdot)$ that are used throughout the paper; see \cite{gao2017properties} and the references therein.

\begin{itemize}
    \item The softmax mapping is the gradient of the convex  log-sum-exp (lse) function:
    \[
        \sigma(v) = \nabla \operatorname{lse}(v),
    \]
    as established in Proposition~1 of~\cite{gao2017properties}.
    \item  The Jacobian of $\sigma$ is given by
    \[
        \nabla\sigma(v)
        = \mathrm{diag}(\sigma(v)) - \sigma(v)\sigma(v)^\top,
    \]
    which is symmetric and positive semidefinite. In particular, for all
    $w\in\mathbb{R}^n$,
    \(
        w^\top \nabla \sigma(v)\, w \;\ge\; 0,
    \)
    with equality if and only if
    $w \in \operatorname{span}\{\mathbf{1}_n\}$.
    Consequently, $\nabla\sigma(v)$ has rank $n-1$ and
    \(
        \ker(\nabla\sigma(v)) = \operatorname{span}\{\mathbf{1}_n\}.
    \)

    \item The softmax mapping is invariant under additive shifts along the
    all-ones direction:
    \[
        \sigma(v + c\,\mathbf{1}_n) = \sigma(v),
        \qquad \forall\, c\in\mathbb{R}.
    \]
    Thus, softmax depends only on relative score differences.
\end{itemize}
Beyond the standard properties summarized above, we establish the following
 results, which plays a key role in several of the performance
comparisons developed later.

\begin{lemma}\label{lemma:softmax_sinh}
For any $v\in\mathbb{R}^n$,
\[
v^\top\bigl(\sigma(v)-\sigma(-v)\bigr)\ \ge\ 0,
\]
with equality if and only if $v=c\,\mathbf{1}_n$ for some $c\in\mathbb{R}$.
\end{lemma}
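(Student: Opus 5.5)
Since $\sigma=\nabla\operatorname{lse}$ and $\operatorname{lse}$ is convex, I would prove the inequality by monotonicity of the gradient of a convex function. For any $u,w\in\mathbb{R}^n$ we have $(u-w)^\top(\nabla\operatorname{lse}(u)-\nabla\operatorname{lse}(w))\ge 0$. Taking $u=v$ and $w=-v$ gives
\[
2v^\top\bigl(\sigma(v)-\sigma(-v)\bigr)\ \ge\ 0 .
\]
To obtain the equality characterization as well, I would write the gradient difference as a path integral using the Jacobian formula recalled above:
\[
\sigma(v)-\sigma(-v)=\int_{-1}^{1}\nabla\sigma(tv)\,v\,dt .
\]
Hence
\[
v^\top\bigl(\sigma(v)-\sigma(-v)\bigr)=\int_{-1}^{1} v^\top\nabla\sigma(tv)\,v\,dt .
\]
The integrand is nonnegative because $\nabla\sigma(tv)=\operatorname{diag}(\sigma(tv))-\sigma(tv)\sigma(tv)^\top\succeq 0$. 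The map $t\mapsto v^\top\nabla\sigma(tv)v$ is continuous, being a composition of smooth maps.

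For the equality case, suppose first that $v=c\mathbf{1}_n$. Shift invariance gives $\sigma(v)=\sigma(-v)=\frac1n\mathbf{1}_n$, so the expression vanishes. Conversely, suppose the integral is zero. Since the integrand is continuous and nonnegative, it vanishes for every $t\in[-1,1]$; it suffices to use $t=0$. There $\nabla\sigma(0)=\frac1n I_n-\frac1{n^2}\mathbf{1}_n\mathbf{1}_n^\top$, and
\[
v^\top\nabla\sigma(0)v=\frac1n\Bigl(\|v\|_2^2-\tfrac1n(\mathbf{1}_n^\top v)^2\Bigr).
\]
By Cauchy--Schwarz this is zero only when $v\in\operatorname{span}\{\mathbf{1}_n\}$. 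Alternatively, one can invoke the stated kernel property, namely that $w^\top\nabla\sigma(v')w=0$ if and only if $w\in\operatorname{span}\{\mathbf{1}_n\}$, at any point $v'$.

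The only point needing care is this strictness step. Strict convexity of $\operatorname{lse}$ fails along $\mathbf{1}_n$, so bare gradient monotonicity does not settle when equality holds. The integral representation together with the kernel characterization of the Jacobian handles this cleanly, and I expect no further obstacle.
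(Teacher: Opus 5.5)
Your proof is correct, but it takes a different route from the paper. The paper works by direct algebra. It puts $\sigma(v)$ and $\sigma(-v)$ over the common denominator $Z_1Z_2$, where $Z_1=\sum_j e^{v_j}$ and $Z_2=\sum_j e^{-v_j}$, and symmetrizes the resulting double sum in $(i,j)$. This gives the closed form
\[
v^\top\bigl(\sigma(v)-\sigma(-v)\bigr)=\frac{2}{Z_1Z_2}\sum_{1\le j<i\le n}(v_i-v_j)\sinh(v_i-v_j).
\]
Both the sign and the equality case then follow immediately, since $a\sinh a\ge 0$ with equality only at $a=0$.

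You instead treat the lemma as an instance of monotonicity of $\sigma=\nabla\operatorname{lse}$. You write the quantity as $\int_{-1}^{1}v^\top\nabla\sigma(tv)\,v\,dt$, and you get strictness from continuity of the integrand together with the explicit Jacobian at $t=0$ (or the kernel property).

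The paper's argument is fully elementary and self-contained. It also yields an explicit formula showing that the quantity is governed by the pairwise differences $v_i-v_j$, which could be used for quantitative lower bounds. Your argument uses nothing specific to the pair $(v,-v)$. The same computation gives $(u-w)^\top\bigl(\sigma(u)-\sigma(w)\bigr)\ge 0$ for arbitrary $u,w$, with equality if and only if $u-w\in\operatorname{span}\{\mathbf{1}_n\}$. It is also the same mean-value/Jacobian device the paper itself uses in the oracle-RD and fixed-payoff comparisons.

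Your opening appeal to bare gradient monotonicity is redundant once the integral representation is in place, but it is harmless.
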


\begin{proof}
Let
\[
    V \;\coloneqq\; v^\top\bigl(\sigma(v)-\sigma(-v)\bigr).
\]
Write $Z_1\coloneqq\sum_{j=1}^n e^{v_j}$ and $Z_2\coloneqq\sum_{j=1}^n e^{-v_j}$. Then
\begin{align*}
V
&= \sum_{i=1}^n v_i\!\left(\frac{e^{v_i}}{Z_1}-\frac{e^{-v_i}}{Z_2}\right)\\
&= \frac{1}{Z_1Z_2}\sum_{i=1}^n \sum_{j=1}^n v_i\!\left(e^{v_i-v_j}-e^{-(v_i-v_j)}\right) \\
&= \frac{1}{Z_1Z_2}\sum_{1\le j<i\le n} (v_i-v_j)\!\left(e^{v_i-v_j}-e^{-(v_i-v_j)}\right) \\
&= \frac{2}{Z_1Z_2}\sum_{1\le j<i\le n} (v_i-v_j)\,\sinh(v_i-v_j)\ \ge\ 0,
\end{align*}
because $a\,\sinh(a)\ge 0$ for all $a\in\mathbb{R}$, with equality only at $a=0$.
Hence the sum is zero if and only if $v_i - v_j = 0$ for all $i,j$, i.e., 
$v = c\,\mathbf{1}_n$ for some $c\in\mathbb{R}$.
\end{proof}
The next lemma establishes a global Lipschitz property of the softmax mapping, which is used in the local passivity arguments used later.

\begin{lemma}\label{lem:softmax_lipschitz}
The softmax mapping $\sigma$ is $\tfrac{1}{2}$-Lipschitz with respect to the
Euclidean norm; that is, for all $u,v \in \mathbb{R}^n$,
\begin{equation}
    \label{eq:softmax_lipschitz}
    \|\sigma(u+v) - \sigma(u)\|_2
    \;\le\;
    \frac{1}{2}\,\|v\|_2.
\end{equation}
\end{lemma}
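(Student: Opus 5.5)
We will argue via the mean value theorem in integral form, combined with a uniform bound on the spectral norm of the Jacobian of $\sigma$. Since $\sigma$ is continuously differentiable, for any $u,v\in\mathbb{R}^n$ we can write
\[
\sigma(u+v)-\sigma(u)=\int_0^1 \nabla\sigma(u+\tau v)\,v\,d\tau ,
\]
and hence
\[
\|\sigma(u+v)-\sigma(u)\|_2\le \sup_{w\in\mathbb{R}^n}\|\nabla\sigma(w)\|_2\,\|v\|_2 .
\]
The whole claim therefore reduces to the uniform operator-norm bound $\|\nabla\sigma(w)\|_2\le \tfrac12$ for every $w\in\mathbb{R}^n$.

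To obtain this bound, we use the facts recalled above. The Jacobian $J\coloneqq\nabla\sigma(w)=\operatorname{diag}(s)-ss^\top$, with $s=\sigma(w)\in\operatorname{Int}(\Delta_n)$, is symmetric and positive semidefinite. Consequently its spectral norm equals its largest eigenvalue:
\[
\lambda_{\max}(J)=\max_{\|y\|_2=1} y^\top J y=\max_{\|y\|_2=1}\Bigl(\sum_i s_i y_i^2-\bigl(\textstyle\sum_i s_i y_i\bigr)^2\Bigr).
\]
The quantity in parentheses is the variance of a random variable $Y$ that takes the value $y_i$ with probability $s_i$.

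The key step, and the only nonroutine one, is bounding this variance by $\tfrac12$ when $\|y\|_2=1$. We plan to use the translation invariance of the variance together with a Gershgorin-type estimate. Our first attempt is Gershgorin directly: the $i$-th diagonal entry of $J$ is $s_i(1-s_i)$, and the off-diagonal absolute row sum is $s_i\sum_{j\ne i}s_j=s_i(1-s_i)$. Every eigenvalue therefore lies in $[0,\,2\max_i s_i(1-s_i)]\subseteq[0,\tfrac12]$, because $s_i(1-s_i)\le\tfrac14$. This gives $\lambda_{\max}(J)\le\tfrac12$ at once.

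As a backup, in case we want a variance-style argument instead, we can use $\operatorname{Var}(Y)\le \mathbb{E}[(Y-c)^2]$ for any constant $c$. Choosing $c=(y_{\max}+y_{\min})/2$ gives $\operatorname{Var}(Y)\le (y_{\max}-y_{\min})^2/4$. Since $\|y\|_2=1$, we have $(y_{\max}-y_{\min})^2\le 2(y_{\max}^2+y_{\min}^2)\le 2$, which again yields the bound $\tfrac12$. Either route finishes the proof once it is combined with the integral mean-value bound above.

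We do not expect to need sharpness, but the constant can be checked against the case $n=2$, $s=(\tfrac12,\tfrac12)$, $y=(1,-1)/\sqrt2$. There $y^\top Jy=\tfrac12$, which confirms that $\tfrac12$ is exactly the constant obtained by this method.
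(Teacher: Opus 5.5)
Your proof is correct and follows essentially the same route as the paper: the integral mean-value theorem combined with the absolute row-sum bound $2s_i(1-s_i)\le\tfrac12$ on the Jacobian $\operatorname{diag}(s)-ss^\top$. The paper turns this into a spectral-norm bound via $\|J\|_2\le\sqrt{\|J\|_1\|J\|_\infty}$ and symmetry, while you use Gershgorin together with symmetry and positive semidefiniteness, which is the same estimate in a different wrapper.
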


\begin{proof}
By the mean-value theorem for vector-valued functions,
\[
\sigma(u+v) - \sigma(u)
= \int_0^1 \nabla\sigma(u + s v)\, v \, ds.
\]
Taking the Euclidean norm on both sides and using the submultiplicativity property of the induced matrix norm gives
\begin{equation*}
    \begin{aligned}
        \|\sigma(u+v) - \sigma(u)\|_2
&\le \int_0^1 \|\nabla\sigma(u + s v)\, v\|_2\, ds \\
&\le \int_0^1 \|\nabla\sigma(u + s v)\|_2\, ds\, \|v\|_2.
    \end{aligned}
\end{equation*}
For any $s = \sigma(z) \in \Delta_n$, one has
\[
\|\nabla\sigma(z)\|_2
= \|\mathrm{diag}(s) - s s^\top\|_2
\le \frac{1}{2}.
\]
This follows since each row of $\nabla\sigma(z)$ satisfies
\[
\sum_{j=1}^n |[\nabla\sigma(z)]_{ij}| = 2 s_i (1-s_i) \le \frac{1}{2},
\]
and by symmetry, $\|J_\sigma(z)\|_1 = \|J_\sigma(z)\|_\infty \le \tfrac{1}{2}$,
implying $\|\nabla\sigma(z)\|_2 \le \sqrt{\|\nabla\sigma(z)\|_1 \|\nabla\sigma(z)\|_\infty} \le \tfrac{1}{2}$.
Hence,
\[
\|\sigma(u+v) - \sigma(u)\|_2
\le \int_0^1 \frac{1}{2}\, ds\, \|v\|_2
= \frac{1}{2}\, \|v\|_2.
\]
\end{proof}
%%%%%%%%%%%%%%%%%%%%%%%%%%%%%%%%%%%%%%%%%%%%%%%%
\section{Motivation}\label{sec:motivation}

In \cite{abdelraouf2025passivity}, a fundamental connection is established between passivity and regret guarantees for continuous–time learning dynamics. In particular, it is shown that if a learning dynamic model is passive from the payoff input $p$ to the output $x-\bar{x}$ for every fixed strategy $\bar{x}\in\Delta_n$, then the learning dynamic has finite regret.
Equivalently, there exists a constant $C>0$ such that, for every payoff trajectory $p(\cdot)$ and every horizon $T>0$,
\[
\sup_{\bar{x}\in\Delta_n}\int_0^T p(t)^\top(\bar{x}-x(t))\,dt \;\le\; C,
\]
Equivalently, the regret with respect to any fixed strategy
$\bar{x}\in\Delta_n$ is uniformly bounded over time.

Using this framework, RD \eqref{eq:RD} is shown to satisfy the required passivity condition and hence to guarantee finite regret. In contrast, BNN dynamics \eqref{eq:BNN}, Smith dynamics \eqref{eq:Smith}, TP dynamics \eqref{eq:TP}, and Ex–RD violate this passivity condition. Specifically, for each of these dynamics, one can construct a payoff trajectory $p(\cdot)$ and select a fixed strategy $\bar{x}\in\Delta_n$ such that passivity from $p$ to $x-\bar{x}$ fails. Consequently, these learning
rules do not have finite regret.

Finite regret therefore provides a strong performance guarantee: asymptotically, the learner achieves an average reward no worse than that obtained by any fixed strategy in the simplex.

A natural question then arises: \emph{Is finite regret, by itself, a reliable proxy for performance as measured by asymptotic average reward?} The following two examples demonstrate that finite regret guarantees alone are \emph{insufficient} to assess long--run performance: there exist environments in which a learning rule that does not have finite regret achieves a strictly higher
asymptotic average reward than a finite--regret algorithm, and conversely, environments in which a finite--regret algorithm outperforms learning rules that do not have finite regret.

\begin{example}[Finite–regret dominance {\cite{abdelraouf2025passivity}}]
\label{ex:RD is better}
Consider the payoff trajectory
\[
p(t)=\begin{bmatrix}\sin t \\ 0.5\end{bmatrix}.
\]
For this environment, BNN, Smith, and TP dynamics achieve asymptotic average rewards of approximately $0.374$, $0.453$, and $0.467$, respectively, each strictly lower than the reward achieved by RD, which has finite regret. See Fig.~\ref{fig:avg_reward_RD_better}.
\end{example}

\begin{figure}[H]
    \centering
    \includegraphics[width=0.6\linewidth]{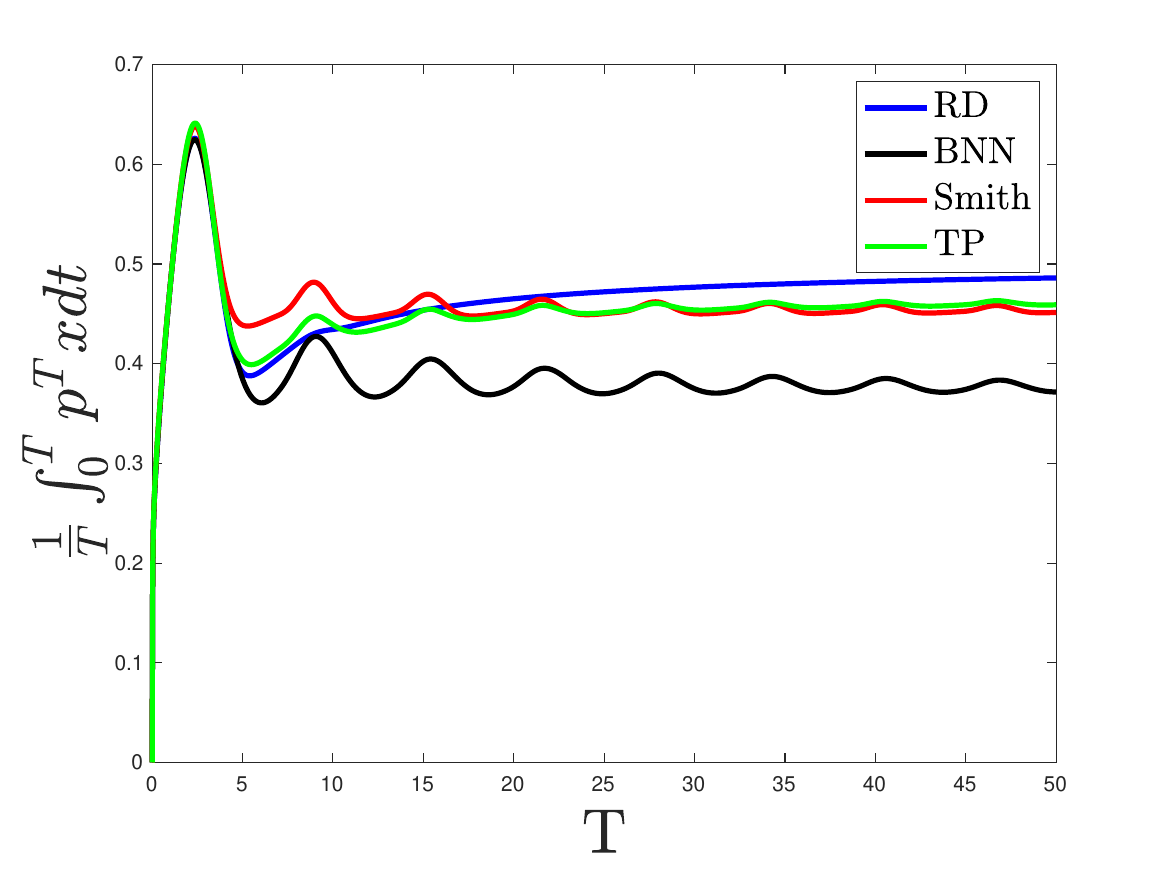}
    \caption{Performance of different learning dynamics in the environment
  \(p(t)=\begin{bmatrix}\sin t & 0.5\end{bmatrix}^\top\).}
  \label{fig:avg_reward_RD_better}
\end{figure}

\begin{example}
\label{ex:RD is worse}
Now consider the payoff trajectory
\[
p(t)=\begin{bmatrix}\sin t \\ -\sin t\end{bmatrix}.
\]
Under RD, the asymptotic average reward equals $0$. In
contrast, BNN, Smith, and TP dynamics achieve approximately $0.18$, $0.33$, and $0.18$, respectively—each strictly higher than the value obtained by RD. See Fig.~\ref{fig:avg_reward_RD_worse}.
\end{example}

\begin{figure}[H]
  \centering
  \includegraphics[width=0.6\linewidth]{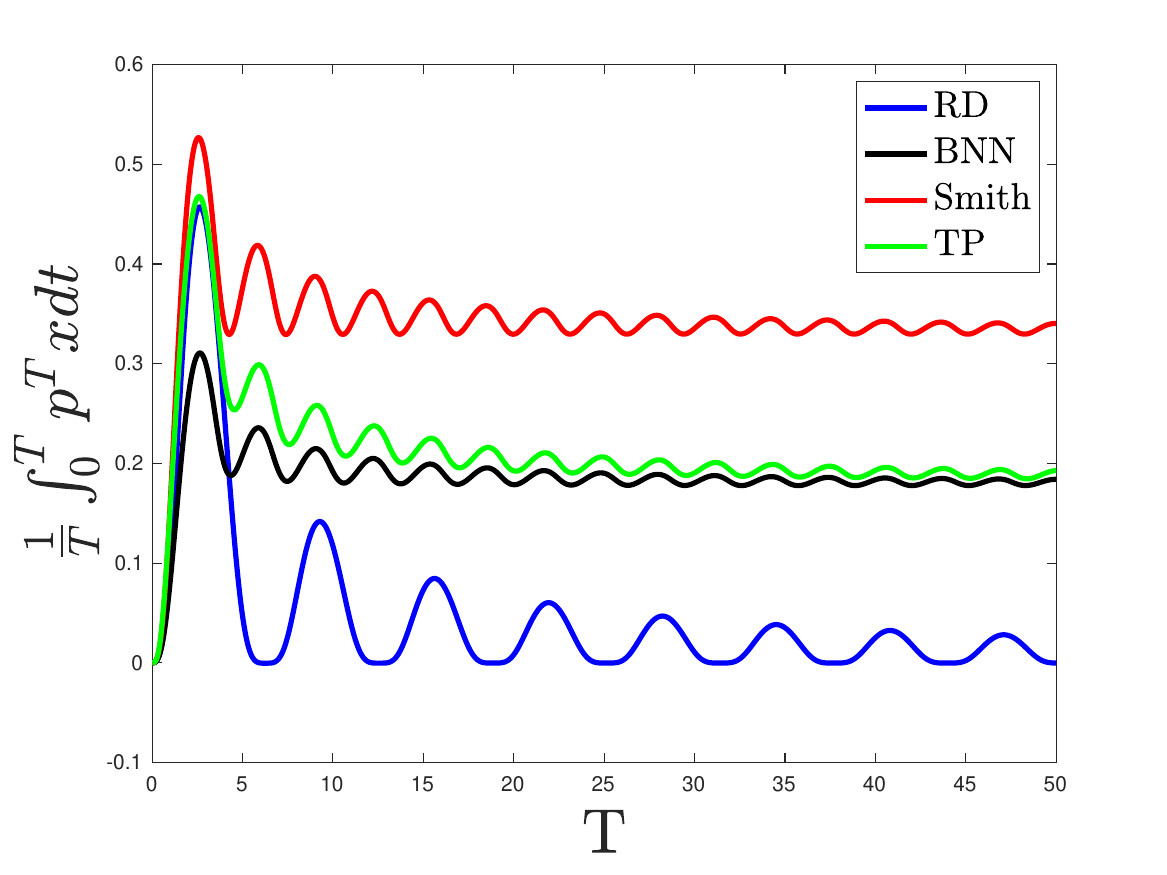}
  \caption{Performance of different learning dynamics in the environment
  \(p(t)=\begin{bmatrix}\sin t &-\sin t\end{bmatrix}^\top\).}
  \label{fig:avg_reward_RD_worse}
\end{figure}

Taken together, these examples show that while finite regret guarantees that the asymptotic average reward of a learning rule is no worse than that of any fixed strategy in the simplex, there exist payoff environments in which learning dynamics that do \emph{not} have finite regret achieve strictly higher asymptotic average rewards. Consequently, finite regret alone does not ensure strong performance across all payoff environments. 

This observation motivates the search for learning algorithms that combine
finite regret guarantees with uniformly superior performance across payoff
environments, and naturally leads to the following question.

\medskip
\noindent\emph{Does there exist a ``free lunch'' among no--regret algorithms?}
That is, can one find two no--regret learning algorithms \(A\) and \(B\) such that \(B\) achieves \emph{at least} as large an asymptotic average reward as \(A\) in every payoff environment, and a strictly larger reward in
some environment?

This paper addresses this question by introducing a notion of performance dominance between learning dynamics and by comparing the cumulative rewards achieved by different no--regret algorithms. To formalize this comparison, we introduce the following definitions.

\begin{definition}[Uniform Dominance]
Let $p(\cdot)$ be a payoff trajectory, and let $x_{\mathrm{L1}}(\cdot)$ and 
$x_{\mathrm{L2}}(\cdot)$ denote the strategies generated by learning dynamics 
$\mathrm{L1}$ and $\mathrm{L2}$, respectively, when subjected to the same payoff signal. $\mathrm{L1}$ is said to uniformly dominate $\mathrm{L2}$ if, for every horizon $T>0$,
\[
    \int_{0}^{T} p(t)^\top x_{\mathrm{L1}}(t)\, dt
    \;\ge\;
    \int_{0}^{T} p(t)^\top x_{\mathrm{L2}}(t)\, dt .
\]
That is, $\mathrm{L1}$ achieves a weakly higher cumulative reward than $\mathrm{L2}$ at every finite time.
\end{definition}

\begin{definition}[Asymptotic Dominance]
Let $p(\cdot)$ be a payoff trajectory, and let $x_{\mathrm{L1}}(\cdot)$ 
and $x_{\mathrm{L2}}(\cdot)$ denote the corresponding strategies produced 
by learning dynamics $\mathrm{L1}$ and $\mathrm{L2}$ under the same payoff signal.
$\mathrm{L1}$ is said to asymptotically dominate $\mathrm{L2}$ if 
\[
    \liminf_{T\to\infty}
    \frac{1}{T}\int_{0}^{T} p(t)^\top x_{\mathrm{L1}}(t)\, dt
    \;\ge\;
    \liminf_{T\to\infty}
    \frac{1}{T}\int_{0}^{T} p(t)^\top x_{\mathrm{L2}}(t)\, dt .
\]
In this case, $\mathrm{L1}$ achieves a weakly higher long-run average reward than $\mathrm{L2}$.
\end{definition}

Motivated by this perspective, we search for learning dynamics that preserve
finite regret guarantees while achieving systematically improved performance
in the sense of uniform or asymptotic dominance. In this paper, we pursue this
goal by studying payoff--based higher--order extensions of replicator dynamics.

%%%%%%%%%%%%%%%%%%%%%%%%%%%%%%%%%%%%%%%%%%%%%%%%%%%%%%%%%%%%%%%%%
\section{Payoff-Based Higher-Order replicator dynamics}\label{sec:pay_off_based_higher_order_RD}

In this section, we introduce a payoff--based, higher--order variant of the
replicator dynamics (RD). The design follows a classical idea from feedback
control, namely anticipatory (lead) compensation
\cite{shamma2005dynamic,arslan2004distributed,arslan2006anticipatory}.
Such higher--order modifications have been shown to enable convergence to
mixed--strategy Nash equilibria in settings where first--order learning
dynamics fail to converge \cite{toonsi2023higher}. In the anticipatory RD,
rather than feeding the instantaneous payoff
$p(t)\in\mathbb{R}^n$ directly into the standard RD, we apply a simple LTI system
that produces an anticipated payoff $\hat p(t)$:
\begin{equation}
\begin{aligned}
    \dot q &= \lambda\bigl(p-q\bigr),\\
    \hat p &= p + \gamma\,\dot q \;=\; -\,\gamma\lambda\,q + (1+\gamma\lambda)\,p,
\end{aligned}
\label{eq:anticipatroy_model}
\end{equation}
with state $q\in\mathbb{R}^n$ and parameters $\lambda,\gamma>0$.
The corresponding (diagonal) transfer function matrix from $p$ to $\hat p$ is
\[
\frac{(1+\gamma\lambda)s+\lambda}{s+\lambda} I_n.
\]

Hence, the anticipatory RD is characterized as a cascade interconnection between the LTI system \eqref{eq:anticipatroy_model} and the standard RD, yielding
\begin{equation}
\begin{aligned}
    \dot z &= p + \gamma\lambda\,(p-q)=\hat{p},\\
    \dot q &= \lambda\,(p-q),\\
    x &= \sigma(z),
\end{aligned}
\label{eq:anticipatory_RD}
\end{equation}
where $z(t)$ denotes the score signal.
A block-diagram representation of the anticipatory RD
\eqref{eq:anticipatory_RD} is shown in
Figure~\ref{fig:anticipatory_RD_block}.

\begin{figure}[H]
    \centering
    \includegraphics[width=0.7\linewidth]{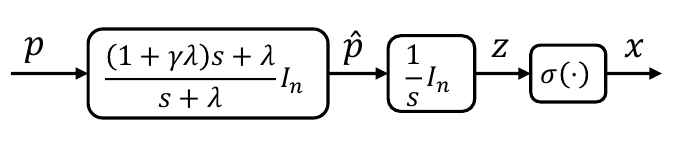}
    \caption{Block diagram representation for the anticipatory RD.}
    \label{fig:anticipatory_RD_block}
\end{figure}

To generalize the higher--order variant of RD, we introduce the predictive (optimistic) RD, in which the score signal is given by
\(
z(t) = \int_0^t p(\tau)\, d\tau + m(t),
\)
where $m(t)$ is a prediction of the payoff $p(t)$. The transfer function matrix mapping $p$ to $m$ is $H(s)=h(s)I_n$, corresponding to identical prediction models for each payoff channel, where
\(
h(s)=C_h(sI_m-A_h)^{-1}B_h,
\)
with $A_h\in\mathbb{R}^{m\times m}$,
$B_h\in\mathbb{R}^{m\times 1}$, and
$C_h\in\mathbb{R}^{1\times m}$.
Depending on the choice of $H(s)$, the signal $m(t)$ can be interpreted
as a predictor of the instantaneous payoff. In this sense, and for
terminological simplicity, we will refer to $H(s)$ as a predictor
throughout the paper, even though the analysis applies to general
stable linear system mapping  $p$ to $m$.
We now write a state-space realization of the resulting predictive
RD.  The dynamic model of the predictive RD is
\begin{equation}
    \begin{aligned}
        \dot{r} & = p,\\
        \dot{x}_h &= (A_h \otimes I_n)\,x_h + (B_h \otimes I_n)\,p,\\
        m &= (C_h \otimes I_n)\,x_h,\\
        z &= r + m,\\
        x &= \sigma(z),
    \end{aligned}
    \label{eq:predictive_RD}
\end{equation}
where $r\in\mathbb{R}^n$ is the score of the standard RD,
$x_h\in\mathbb{R}^{mn}$ is the predictor hidden state vector, and
$m\in\mathbb{R}^n$ is the predicted payoff signal.
A block-diagram representation of the predictive RD, with $h(s)I_n$ as the
prediction transfer function matrix, is shown in
Figure~\ref{fig:predictive_RD_block}.

\begin{figure}[H]
    \centering
    \includegraphics[width=0.6\linewidth]{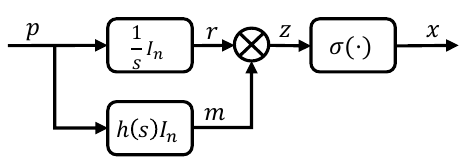}
    \caption{Block diagram representation for predictive RD.}
    \label{fig:predictive_RD_block}
\end{figure}

\begin{remark}\label{rem:anticipatory_predicitive}
In \eqref{eq:predictive_RD}, setting
$A_h=-\lambda$, $B_h=\gamma\lambda$, and $C_h=1$ yields
$\dot m=\gamma\lambda p-\lambda m$, corresponding to a first--order low--pass
predictor with transfer function $h(s)=\tfrac{\gamma\lambda}{s+\lambda}$.
In this case, the transfer function matrix mapping the payoff $p$ to the score
$z$ is
\[
\left(\tfrac{1}{s}+\tfrac{\gamma\lambda}{s+\lambda}\right)I_n
=\tfrac{(1+\gamma\lambda)s+\lambda}{s(s+\lambda)}I_n,
\]
which coincides with the transfer function mapping $p$ to $z$ in the
anticipatory RD \eqref{eq:anticipatory_RD}. The resulting dynamics are
\begin{equation}
    \begin{aligned}
        \dot{r} & = p,\\
        \dot{m} & = \gamma\lambda p-\lambda m,\\
        z & = r + m,\\
        x & = \sigma(z).
    \end{aligned}
    \label{eq:predictive_RD_low_pass}
\end{equation}
From \eqref{eq:predictive_RD_low_pass}, we obtain
$\dot z=p+\gamma\lambda p-\lambda m$. Setting $m=\gamma q$ recovers the
anticipatory RD \eqref{eq:anticipatory_RD}. Therefore, predictive RD with a
low--pass predictor is dynamically equivalent to the anticipatory RD
\eqref{eq:anticipatory_RD}, provided that $m(0)=\gamma q(0)$.
\end{remark}
Motivated by the formulation of the predictive RD, we study an idealized
setting with perfect payoff information.
%%%%%%%%%%%%%%%%%%%%%%%%%%%%%%%%%%%%%%%%%%%%%%%%%%%%%%%%%%%%%%%%%%%%%%
\section{Oracle Replicator Dynamics (Oracle RD)}\label{sec:oracle_RD}

We consider a special case of the predictive RD \eqref{eq:predictive_RD} in
which the learner has perfect access to the instantaneous payoff vector, i.e., $m(t)=p(t)$. We refer to this model as the oracle RD. For the case $H(s)=I_n$, the associated score variable equals the cumulative payoff augmented by its instantaneous value:
\[
z(t) \;=\; \int_{0}^{t} p(\tau)\,d\tau \;+\; p(t).
\]
The resulting oracle RD model can be written as
\begin{equation}
\label{eq:oracle_RD}
\begin{aligned}
\dot r &= p, \\
z &= r+ p, \\
x &= \sigma\bigl(z\bigr).
\end{aligned}
\end{equation}
We next show that, for any payoff trajectory $p(\cdot)$, the oracle RD uniformly dominates the standard RD.

\begin{theorem}
\label{thm:oracle_dominates_RD}
Consider the standard RD \eqref{eq:RD} and the oracle
RD \eqref{eq:oracle_RD} driven by the same payoff trajectory
$p(\cdot)$. Let $z_{\mathrm{R}}(\cdot)$ and $x_{\mathrm{R}}(\cdot)$ denote the score and strategy of the standard RD, and let
$z_{\mathrm{OR}}(\cdot)$ and $x_{\mathrm{OR}}(\cdot)$ denote the corresponding quantities for the oracle RD. If the initial conditions are matched as \( r(0) = z_{\mathrm{R}}(0) \), then the oracle RD uniformly dominates the standard RD, i.e.,
\[
\int_{0}^{T} p(t)^\top x_{\mathrm{OR}}(t)\,dt
\;\ge\;
\int_{0}^{T} p(t)^\top x_{\mathrm{R}}(t)\,dt
\quad \text{for all } T \ge 0.
\]
\end{theorem}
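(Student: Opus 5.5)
The plan is to reduce the comparison to a pointwise inequality, using the fact that both dynamics share the same integrated score. First I will identify the two scores. In the standard RD \eqref{eq:RD} we have $z_{\mathrm{R}}(t)=z_{\mathrm{R}}(0)+\int_0^t p(\tau)\,d\tau$. In the oracle RD \eqref{eq:oracle_RD} we have $r(t)=r(0)+\int_0^t p(\tau)\,d\tau$. Under the matching $r(0)=z_{\mathrm{R}}(0)$ this gives $r(t)=z_{\mathrm{R}}(t)$ for all $t\ge 0$. Hence $z_{\mathrm{OR}}(t)=z_{\mathrm{R}}(t)+p(t)$, and
\[
x_{\mathrm{OR}}(t)-x_{\mathrm{R}}(t)=\sigma\bigl(z_{\mathrm{R}}(t)+p(t)\bigr)-\sigma\bigl(z_{\mathrm{R}}(t)\bigr).
\]

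The cumulative reward gap is therefore
\[
\int_0^T p(t)^\top\Bigl(\sigma\bigl(z_{\mathrm{R}}(t)+p(t)\bigr)-\sigma\bigl(z_{\mathrm{R}}(t)\bigr)\Bigr)\,dt .
\]
I will show that the integrand is nonnegative for every $t$. The key tool is that $\sigma=\nabla\operatorname{lse}$ with $\operatorname{lse}$ convex, as recalled in the properties of the softmax mapping. The gradient of a convex differentiable function is a monotone operator: for all $a,b$,
\[
(\nabla\operatorname{lse}(a)-\nabla\operatorname{lse}(b))^\top(a-b)\ge 0.
\]
Taking $a=z_{\mathrm{R}}(t)+p(t)$ and $b=z_{\mathrm{R}}(t)$ gives $p(t)^\top(x_{\mathrm{OR}}(t)-x_{\mathrm{R}}(t))\ge 0$ pointwise. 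Integrating over $[0,T]$ then yields the claim for every $T\ge 0$.

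Alternatively, I can make the monotonicity explicit with the mean-value representation used in Lemma~\ref{lem:softmax_lipschitz}:
\[
p^\top\bigl(\sigma(z+p)-\sigma(z)\bigr)=\int_0^1 p^\top\nabla\sigma(z+sp)\,p\,ds\ge 0,
\]
since $\nabla\sigma$ is symmetric positive semidefinite. This form also shows that the integrand vanishes only when $p(t)\in\operatorname{span}\{\mathbf{1}_n\}$. So the dominance is strict on any interval where the payoff is not a uniform shift across actions.

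I do not expect a real obstacle. The only point that needs care is the bookkeeping step showing that the matched initial condition makes $r\equiv z_{\mathrm{R}}$, so that the two strategies are softmax evaluations at points differing exactly by $p(t)$. Once that is in place, monotonicity of $\nabla\operatorname{lse}$ gives the result directly; in fact it gives instantaneous, not merely cumulative, dominance.
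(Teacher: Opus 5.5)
Your proposal is correct and matches the paper's proof. The paper also reduces to $z_{\mathrm{OR}}(t)-z_{\mathrm{R}}(t)=p(t)$ via the matched initialization, then uses exactly your ``alternative'' mean-value representation $\int_0^1 p^\top\nabla\sigma(z_{\mathrm{R}}+s\,p)\,p\,ds\ge 0$ with $\nabla\sigma\succeq 0$, and makes the same strictness remark about $\operatorname{span}\{\mathbf{1}_n\}$. Your primary route via monotonicity of $\nabla\operatorname{lse}$ is a more compact packaging of the same fact (convexity of $\operatorname{lse}$), and both versions give pointwise-in-time, not just cumulative, dominance.
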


\begin{proof}
Under the common payoff trajectory $p(\cdot)$, the score and strategy of the standard RD satisfy
\[
z_{\mathrm{R}}(t)
= z_{\mathrm{R}}(0) + \int_{0}^{t} p(\tau)\,d\tau,
\qquad
x_{\mathrm{R}}(t) = \sigma\bigl(z_{\mathrm{R}}(t)\bigr).
\]
Similarly, the oracle RD satisfies
\[
z_{\mathrm{OR}}(t)
= r(0) + \int_{0}^{t} p(\tau)\,d\tau + p(t),
\quad
x_{\mathrm{OR}}(t) = \sigma\bigl(z_{\mathrm{OR}}(t)\bigr).
\]
With the initialization $r(0)=z_{\mathrm{R}}(0)$, it follows that
\[
z_{\mathrm{OR}}(t) - z_{\mathrm{R}}(t) = p(t)
\qquad \text{for all } t \ge 0.
\]
Define the cumulative--reward gap
\begin{equation*}
    \begin{aligned}
        \Delta P(T) &=
\int_{0}^{T}
p(t)^\top\!\bigl(x_{\mathrm{OR}}(t) - x_{\mathrm{R}}(t)\bigr)\,dt \\
        &=
\int_{0}^{T}
p(t)^\top
\bigl(
\sigma(z_{\mathrm{OR}}(t)) - \sigma(z_{\mathrm{R}}(t))
\bigr)\,dt .
    \end{aligned}
\end{equation*}
By the mean--value theorem for vector--valued functions,
\[
\sigma(z_{\mathrm{OR}})-\sigma(z_{\mathrm{R}})
= \int_{0}^{1}
\nabla\sigma \bigl(z_{\mathrm{R}}+s(z_{\mathrm{OR}}-z_{\mathrm{R}})\bigr)\,
(z_{\mathrm{OR}}-z_{\mathrm{R}})\,ds .
\]
Using $z_{\mathrm{OR}}(t)-z_{\mathrm{R}}(t)=p(t)$, we obtain
\[
\Delta P(T)
=
\int_{0}^{T}
p(t)^\top
\left(
\int_{0}^{1}
\nabla\sigma\!\bigl(z_{\mathrm{R}}(t)+s\,p(t)\bigr)\,ds
\right)
p(t)\,dt .
\]

Since $\nabla\sigma(\cdot)$ is positive semidefinite and convex combinations of positive semidefinite matrices remain positive semidefinite, the integrand is nonnegative for every $t$. Hence,
\[
\Delta P(T) \ge 0 \qquad \text{for all } T \ge 0,
\]
which establishes the uniform dominance claim.

Moreover, equality holds only if $p(t)$ lies in the nullspace of
$\nabla\sigma(\cdot)$ along the entire trajectory, i.e., if
$p(t) \in \operatorname{span}\{\mathbf{1}_n\}$ almost everywhere. This completes
the proof.
\end{proof}
Although Theorem~\ref{thm:oracle_dominates_RD} may be viewed as a first
``free lunch'' result, exhibiting a learning rule that uniformly outperforms
another across all payoff environments, the oracle RD relies on perfect,
instantaneous access to the payoff vector and is therefore not implementable in
practice. Indeed, if the payoff were fully known at each time $t$, the learner
could simply select the maximizer with probability one and achieve the highest
possible instantaneous reward.

Nevertheless, the oracle RD plays an important conceptual role. It admits a
simple representation that makes explicit the role of prediction in
improving performance. This observation motivates the frequency--domain
viewpoint developed next, which links the performance of the learning dynamics directly to the gain and phase characteristics of the transfer function mapping payoffs to scores.

%%%%%%%%%%%%%%%%%%%%%%%%%%%%%%%%%%%%%%%%%%%%%%%%%%%%%%%%%%%%%%%%%%%%%%%%%
\section{Frequency--Domain Intuition (Bode View)}\label{sec:frequency_domain_intuition}

The dominance result for oracle RD admits a natural interpretation from a
frequency--domain perspective. In this section, we provide a Bode--plot view
that offers qualitative intuition for why anticipatory or predictive structure
leads to improved performance.

For the standard RD and the oracle RD, the linear mappings from the payoff
signal $p$ to the score signal $z$ can be written as diagonal transfer function
matrices
\[
G_{\mathrm{R}}(s) = g_{\mathrm{R}}(s)\, I_n, \qquad
G_{\mathrm{OR}}(s) = g_{\mathrm{OR}}(s)\, I_n,
\]
where
\[
g_{\mathrm{R}}(s) = \frac{1}{s}, \qquad
g_{\mathrm{OR}}(s) = 1 + \frac{1}{s}.
\]

The corresponding frequency responses are given by
\[
g_{\mathrm{R}}(j\omega) = -\frac{j}{\omega}, \qquad
g_{\mathrm{OR}}(j\omega) = 1 - \frac{j}{\omega}, \qquad \omega > 0,
\]
with magnitudes
\[
|g_{\mathrm{R}}(j\omega)| = \frac{1}{\omega}, \qquad
|g_{\mathrm{OR}}(j\omega)| = \sqrt{1 + \frac{1}{\omega^2}},
\]
and phases
\[
\arg(g_{\mathrm{R}}(j\omega)) = -\frac{\pi}{2}, \qquad
\arg(g_{\mathrm{OR}}(j\omega)) = -\arctan\!\Bigl(\frac{1}{\omega}\Bigr).
\]

These expressions show that, at every nonzero frequency, oracle RD has
both a strictly larger gain and a strictly smaller phase lag than the standard
RD. In particular, the phase lag of oracle RD is always less than $\pi/2$.

These frequency--domain properties provide direct intuition for the observed
performance improvement. Performance improves when the score signal is both
better aligned with the payoff and of larger magnitude. For sinusoidal payoff
inputs, the reduced phase lag implies that the oracle RD score signal tracks the
payoff more closely in time.

Because the softmax mapping $\sigma(\cdot)$ is static, this improved alignment
of the score signal carries over directly to the strategy signal
$x = \sigma(z)$. As a result, the strategy produces a larger time--averaged
projection onto the payoff direction, yielding a higher accumulated reward.

In addition, the increased gain of oracle RD amplifies variations in the score
signal relative to RD. By monotonicity of the softmax mapping, larger score
magnitudes concentrate probability mass more heavily on the instantaneous
maximizers of the payoff. This concentration effect increases the instantaneous
reward $p(t)^\top x(t)$ and further contributes to improved performance.

Motivated by this intuition, we next make the relationship between frequency
response of the transfer function associated with each learning dynamics and its performance precise. We consider learning dynamics characterized by
a cascade interconnection between a diagonal transfer function matrix
$G(s)=g(s)I_n$ and the softmax mapping $\sigma(\cdot)$, where $g(s)$ is a passive transfer function. We compare their asymptotic average reward against that of the standard RD under sinusoidal payoff environments of the form
\[
p(t) = \bar p\,\sin(\omega t) + \bar q\,\cos(\omega t),
\]
with $\bar p,\bar q \in \mathbb{R}^n$ and $\omega>0$. The following theorems show explicitly how the gain and phase lag of $g(j\omega)$ determine the asymptotic average reward achieved by the learning dynamics.
%%%%%%%%%%%%%%%%%%%%%%%%%%%%%%%%%%%%%%%%%%%%%%%%%%%%%%%%%%%%%%%
\subsection{Dominance over Replicator Dynamics}

\begin{theorem}
\label{thm:passive_vs_RD_sinusoid}
Fix $\omega>0$ and $\bar p,\bar q\in\mathbb{R}^n$. Consider the class of learning dynamics defined by the cascade interconnection of a diagonal LTI system $G(s)=g(s)I_n$ with the softmax mapping $\sigma(\cdot)$, where $g(s)$ is a (passive) transfer function. Under the sinusoidal payoff environment
\[
p(t)=\bar p\,\sin(\omega t)+\bar q\,\cos(\omega t),
\]
let $J_g(\omega)$ denotes the asymptotic average reward achieved by this learning dynamics, and let $J_{\mathrm{RD}}(\omega)$ denote the corresponding asymptotic average reward of the standard replicator dynamics (RD). Then
\[
J_g(\omega) \;\ge\; J_{\mathrm{RD}}(\omega).
\]
Moreover, the inequality is strict whenever $g(s)$ is strictly passive and at least one of $\bar p,\bar q$ does not belong to $\operatorname{span}\{\mathbf{1}_n\}$.
\end{theorem}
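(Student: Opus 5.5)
The plan is to show that $J_{\mathrm{RD}}(\omega)=0$ and then that $J_g(\omega)\ge 0$, using a half-period symmetry of the sinusoid together with Lemma~\ref{lemma:softmax_sinh}. Throughout, write
\[
P(t)\coloneqq \bar p\sin(\omega t)+\bar q\cos(\omega t)=p(t),\qquad Q(t)\coloneqq -\bar p\cos(\omega t)+\bar q\sin(\omega t),
\]
so that $\dot Q=\omega P$ and $\dot P=-\omega Q$.

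\textbf{Step 1 (RD).} For the standard RD, $z_{\mathrm{R}}(t)=z_{\mathrm{R}}(0)+Q(t)/\omega+\text{const}$. Moreover $p^\top\sigma(z_{\mathrm R})=\tfrac{d}{dt}\operatorname{lse}(z_{\mathrm R})$, because $\sigma=\nabla\operatorname{lse}$ and $\dot z_{\mathrm R}=p$. Since $z_{\mathrm R}$ is bounded and periodic, $\tfrac1T\int_0^T p^\top x_{\mathrm R}\,dt\to 0$, so $J_{\mathrm{RD}}(\omega)=0$.

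\textbf{Step 2 (steady state of the $g$-dynamics).} Write $g(j\omega)=a+jb$, where $a=\operatorname{Re} g(j\omega)\ge 0$ by passivity. The sinusoidal steady-state score is
\[
z_{\mathrm{ss}}(t)=aP(t)-bQ(t),
\]
since an input $\sin$ produces the output $a\sin+b\cos$, and an input $\cos$ produces $a\cos-b\sin$. I will assume the realization of $g$ is such that $z(t)-z_{\mathrm{ss}}(t)\to 0$. This holds if $g$ is stable; a marginal pole at the origin would add a constant offset, and that case needs separate care. Given this assumption, Lemma~\ref{lem:softmax_lipschitz} and the boundedness of $p$ imply that the transient does not affect the long-run average. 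Hence
\[
J_g=\bigl\langle P^\top\sigma(z_{\mathrm{ss}})\bigr\rangle,
\]
where $\langle\cdot\rangle$ denotes the average over one period.

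\textbf{Step 3 (key identities).} There are three.
\begin{enumerate}
\item Half-period shift: $t\mapsto t+\pi/\omega$ flips the signs of $P$, $Q$ and $z_{\mathrm{ss}}$. This gives $\langle P^\top\sigma(-z_{\mathrm{ss}})\rangle=-J_g$, and therefore $J_g=\tfrac12\langle P^\top(\sigma(z_{\mathrm{ss}})-\sigma(-z_{\mathrm{ss}}))\rangle$.
\item Periodicity of $\operatorname{lse}(\pm z_{\mathrm{ss}})$: its derivative has zero mean. Since $\dot z_{\mathrm{ss}}=-\omega(aQ+bP)$, this yields $a\langle Q^\top\sigma(z_{\mathrm{ss}})\rangle=-bJ_g$ and $a\langle Q^\top\sigma(-z_{\mathrm{ss}})\rangle=bJ_g$.
\item Substitution: if $a>0$, put $P=(z_{\mathrm{ss}}+bQ)/a$ into the expression from the first identity and use the second. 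This gives
\[
J_g\Bigl(1+\tfrac{b^2}{a^2}\Bigr)=\frac{1}{2a}\Bigl\langle z_{\mathrm{ss}}^\top\bigl(\sigma(z_{\mathrm{ss}})-\sigma(-z_{\mathrm{ss}})\bigr)\Bigr\rangle .
\]
\end{enumerate}
By Lemma~\ref{lemma:softmax_sinh} the right-hand side is nonnegative, so $J_g\ge0=J_{\mathrm{RD}}$.

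If $a=0$, then $z_{\mathrm{ss}}=-bQ$ and $P^\top\sigma(-bQ)=-\tfrac{1}{b\omega}\tfrac{d}{dt}\operatorname{lse}(-bQ)$, so $J_g=0$; the case $a=b=0$ is trivial. Either way the inequality holds.

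\textbf{Strictness.} Strict passivity gives $a>0$. If $\bar p$ or $\bar q$ lies outside $\operatorname{span}\{\mathbf 1_n\}$, then $z_{\mathrm{ss}}(t)=aP-bQ$ leaves $\operatorname{span}\{\mathbf 1_n\}$ on a set of positive measure. The reason is that $P(t)$ and $Q(t)$ sweep the plane spanned by $\bar p$ and $\bar q$, and the linear combination $aP-bQ$ with $a\neq0$ traces an ellipse in that plane which is not contained in the line $\operatorname{span}\{\mathbf 1_n\}$. The equality clause of Lemma~\ref{lemma:softmax_sinh} then makes the average strictly positive.

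\textbf{Main obstacle.} The delicate part is Step 2: justifying that the asymptotic average depends only on the zero-mean periodic steady state. A pole of $g$ on the imaginary axis, or a constant score offset, breaks the half-period symmetry used in Step 3. I would first restrict to stable $g$ and handle the transients via the Lipschitz bound of Lemma~\ref{lem:softmax_lipschitz}. I would then check separately whether an integrator component can be absorbed, using the fact that for RD itself any offset is harmless.
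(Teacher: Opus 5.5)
Your proposal is correct and is essentially the paper's proof written in Cartesian rather than polar form: the paper shifts time by $\phi=-\arg g(j\omega)$, kills the quadrature term $T_2$ using the periodicity of $\operatorname{lse}$, and gets $T_1\ge 0$ from the half-period symmetry plus Lemma~\ref{lemma:softmax_sinh}, arriving at $J_g=\cos\phi\,T_1(|g(j\omega)|)$, which is your identity $J_g(1+b^2/a^2)=\frac{1}{2a}\langle z_{\mathrm{ss}}^\top(\sigma(z_{\mathrm{ss}})-\sigma(-z_{\mathrm{ss}}))\rangle$ in other coordinates. The paper simply asserts the zero-mean sinusoidal steady state, so it shares your Step~2 caveat; an integrator's constant offset $c$ is in fact harmless, because if you drop the symmetry step your substitution still gives $J_g(1+b^2/a^2)=\frac{1}{a}\langle y^\top(\sigma(c+y)-\sigma(c))\rangle\ge 0$ for the zero-mean part $y=aP-bQ$, by monotonicity of $\sigma=\nabla\operatorname{lse}$.
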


\begin{proof}
    Apply the payoff $p(t)=\bar p\sin(\omega t)+\bar q\cos(\omega t)$ to the cascade $G(s)=g(s)I_n$ (from $p$ to $z$) followed by the static softmax mapping $x=\sigma(z)$. In steady state,
    \[
z(t)
=
a\,\bar p\,\sin(\omega t-\phi)
+
a\,\bar q\,\cos(\omega t-\phi),
\]
where
\[
a \coloneqq |g(j\omega)|>0,
\qquad
\phi \coloneqq -\arg g(j\omega).
\]
Define the asymptotic average reward
\[
J_g(\omega)
\;\coloneqq\;
\frac{\omega}{2\pi}
\int_{0}^{2\pi/\omega}
\bigl(\bar p\sin(\omega t)+\bar q\cos(\omega t)\bigr)^\top
\sigma\!\bigl(z(t)\bigr)\,dt .
\]
Introducing the change of variables $u=\omega t$ yields
\begin{equation*}
\begin{aligned}
J_g(\phi,a)
&=
\frac{1}{2\pi}
\int_{0}^{2\pi}
(\bar p\sin u+\bar q\cos u)^\top \\
&\qquad\qquad 
\sigma \bigl(
a\,\bar p\sin(u-\phi)+a\,\bar q\cos(u-\phi)
\bigr)\,du .
\end{aligned}
\end{equation*}
Next, let $\tau=u-\phi$. Using $2\pi$-periodicity and the identities
\begin{equation*}
    \begin{aligned}
        \sin(u)&=\sin\tau\cos\phi+\cos\tau\sin\phi,\\
        \cos(u)&=\cos\tau\cos\phi-\sin\tau\sin\phi,
    \end{aligned}
\end{equation*}
define
\[
v(\tau)\coloneqq \bar p\,\sin\tau+\bar q\,\cos\tau,
\qquad
w(\tau)\coloneqq \bar p\,\cos\tau-\bar q\,\sin\tau .
\]
Then
\begin{equation*}
\begin{aligned}
J_g(\phi,a)
&=
\frac{1}{2\pi}
\int_{0}^{2\pi}
\bigl(v(\tau)\cos\phi+w(\tau)\sin\phi\bigr)^\top
\sigma\!\bigl(a\,v(\tau)\bigr)\,d\tau .
\end{aligned}
\end{equation*}
Consequently,
\[
J_g(\phi,a)
=
\cos\phi\,T_1(a)
+
\sin\phi\,T_2(a),
\]
where
\begin{equation*}
    \begin{aligned}
        T_1(a) & = 
\frac{1}{2\pi}
\int_{0}^{2\pi}
v(\tau)^\top \sigma \bigl(a\,v(\tau)\bigr)\,d\tau, \\
T_2(a)&=
\frac{1}{2\pi}
\int_{0}^{2\pi}
w(\tau)^\top \sigma \bigl(a\,v(\tau)\bigr)\,d\tau .
    \end{aligned}
\end{equation*}
\textit{Analysis of $T_2(a)$.}
Since $v'(\tau)=w(\tau)$ and $\sigma(\cdot)=\nabla\mathrm{lse}(\cdot)$, the chain
rule yields
\[
\frac{d}{d\tau}\,\mathrm{lse} \bigl(a\,v(\tau)\bigr)
=
a\,w(\tau)^\top \sigma \bigl(a\,v(\tau)\bigr).
\]
Therefore,
\begin{equation*}
\begin{aligned}
T_2(a)
&=
\frac{1}{2\pi}\int_{0}^{2\pi}
w(\tau)^\top \sigma\!\bigl(a\,v(\tau)\bigr)\,d\tau \\
&=
\frac{1}{2\pi a}
\Bigl[\mathrm{lse}\!\bigl(a\,v(\tau)\bigr)\Bigr]_{0}^{2\pi}
=0,
\end{aligned}
\end{equation*}
since $v(\tau)$ is $2\pi$-periodic.

\textit{Analysis of $T_1(a)$.} Let $\tilde{v}(\tau) = \bar p \sin \tau -\bar{q} \cos \tau$, then 
\small{
\begin{equation*}
    \begin{aligned}
        T_1(a) &= \frac{1}{2\pi} \int_0^{2\pi} v(\tau)^\top \sigma(av(\tau)) d\tau \\
        &=  \int_0^{\pi/2} v(\tau)^\top \sigma(av(\tau)) d\tau  + \int_0^{\pi/2} \tilde{v}(\tau)^\top \sigma(a\tilde{v}(\tau)) d\tau \\
        &+ \int_0^{\pi/2} -v(\tau)^\top \sigma(-av(\tau)) d\tau  + \int_0^{\pi/2} -\tilde{v}(\tau)^\top \sigma(-a\tilde{v}(\tau)) d\tau
    \end{aligned}
\end{equation*}}

Then $T_1(a)$ can be written as 
\begin{equation*}
    \begin{aligned}
        T_1(a) & = \frac{1}{2\pi a} \int_0^{\pi/2} av(\tau)^\top (\sigma(av(\tau))-\sigma(-av(\tau))) d \tau \\
        &+\frac{1}{2\pi a} \int_0^{\pi/2} a\tilde{v}(\tau)^\top (\sigma(a\tilde{v}(\tau))-\sigma(-a\tilde{v}(\tau))) d \tau 
    \end{aligned}
\end{equation*}
By Lemma~\ref{lemma:softmax_sinh}, each integrand is nonnegative and strictly
positive whenever $v(\tau),\tilde{v}(\tau)\notin\operatorname{span}\{\mathbf{1}_n\}$. If at least one of $\bar p$ or $\bar q$ does not belong to
$\operatorname{span}\{\mathbf{1}_n\}$, continuity implies the existence of an
interval of $\tau$ with nonzero measure on which this condition holds, and hence
$T_1(a)>0$. Otherwise, if $\bar p,\bar q\in\operatorname{span}\{\mathbf{1}_n\}$,
then $p(t)\in\operatorname{span}\{\mathbf{1}_n\}$ for all $t$ and $T_1(a)=0$. Hence, 

\[
J_g(\phi,a)=T_1(a)\cos\phi,
\qquad
T_1(a)\ge 0.
\]
Passivity of $g$ implies $\text{Re}\{g(j\omega)\}\ge 0$, and hence
\[
\cos\phi
=
\frac{\text{Re}\{g(j\omega)\}}{|g(j\omega)|}
\ge 0,
\]
which yields $J_g(\omega)\ge 0$. For standard RD,
$g_{\mathrm{RD}}(j\omega)=-j/\omega$, so $\phi_{\mathrm{RD}}=\pi/2$ and
$J_{\mathrm{RD}}=0$. Therefore,
\[
J_g(\omega)\ge J_{\mathrm{RD}}(\omega),
\]
with strict inequality whenever $T_1(a)>0$ and $g(s)$ is strictly passive.
\end{proof}

\subsection{Phase-Lag Dominance at Fixed Gain}

The representation $J_g(\omega)=T_1(a)\cos\phi$ obtained in
Theorem~\ref{thm:passive_vs_RD_sinusoid} isolates the respective roles of the
gain $a=|g(j\omega)|$ and phase lag $\phi=-\arg(g(j\omega))$.
We first consider comparisons at fixed gain.

\begin{corollary}
\label{cor:phase_dominance_fixed_gain}
Fix $\omega>0$ and $\bar p,\bar q\in\mathbb{R}^n$. Consider two learning dynamics
obtained by cascading $g_1(s)I_n$ and $g_2(s)I_n$ with the softmax mapping
$\sigma(\cdot)$, respectively, where $g_1$ and $g_2$ are passive transfer
functions. Suppose that
\begin{equation*}
    \begin{aligned}
        |g_1(j\omega)|&=|g_2(j\omega)|=a>0,\\
        \phi_k &\coloneqq -\arg g_k(j\omega), \quad k=1,2.
    \end{aligned}
\end{equation*}
Let $J_{g_k}(\omega)$ denote the asymptotic average reward achieved by the
$k$-th learning dynamics at frequency $\omega$. Then
\[
J_{g_k}(\omega)=T_1(a)\cos\phi_k .
\]
In particular, if $\cos\phi_1>\cos\phi_2$, then $J_{g_1}(\omega)>J_{g_2}(\omega)$
whenever $T_1(a)>0$, i.e., whenever at least one of $\bar p,\bar q$ does not
belong to $\operatorname{span}\{\mathbf{1}_n\}$. If $\cos\phi_1=\cos\phi_2$, then
$J_{g_1}(\omega)=J_{g_2}(\omega)$.
\end{corollary}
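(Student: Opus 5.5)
The corollary follows almost directly from the representation obtained in the proof of Theorem~\ref{thm:passive_vs_RD_sinusoid}, so the plan is to reuse that computation rather than redo it. First, for each $k=1,2$, I apply the same steady-state argument to the cascade $g_k(s)I_n$ followed by $\sigma(\cdot)$. Under the payoff $p(t)=\bar p\sin(\omega t)+\bar q\cos(\omega t)$, the steady-state score is
\[
z_k(t)=a\,\bar p\sin(\omega t-\phi_k)+a\,\bar q\cos(\omega t-\phi_k).
\]
Here the common gain $a=|g_k(j\omega)|$ appears because of the fixed-gain hypothesis.

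Next, I repeat the change of variables $u=\omega t$ and $\tau=u-\phi_k$ and expand $\sin u$ and $\cos u$. This gives the decomposition
\[
J_{g_k}(\omega)=\cos\phi_k\,T_1(a)+\sin\phi_k\,T_2(a).
\]
The key observation is that $T_1(a)$ and $T_2(a)$ depend only on $a$, $\bar p$, $\bar q$ and not on $\phi_k$. Consequently, both dynamics share the \emph{same} $T_1(a)$ and $T_2(a)$. The argument in the theorem's proof, that $T_2(a)$ is the average of the exact derivative $\tfrac{1}{a}\tfrac{d}{d\tau}\mathrm{lse}(a v(\tau))$ over a period, gives $T_2(a)=0$. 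Hence
\[
J_{g_k}(\omega)=T_1(a)\cos\phi_k .
\]

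For the comparisons, I recall from the theorem that $T_1(a)\ge 0$, and that $T_1(a)>0$ when at least one of $\bar p,\bar q$ lies outside $\operatorname{span}\{\mathbf{1}_n\}$ (via Lemma~\ref{lemma:softmax_sinh} and continuity). Then
\[
J_{g_1}(\omega)-J_{g_2}(\omega)=T_1(a)\,(\cos\phi_1-\cos\phi_2).
\]
This difference is strictly positive when $\cos\phi_1>\cos\phi_2$ and $T_1(a)>0$, and it vanishes when $\cos\phi_1=\cos\phi_2$. Passivity of $g_k$ is not needed for the comparison itself; it only ensures $\cos\phi_k\ge 0$, so that both rewards are nonnegative.

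The only step that needs care is confirming that the steady-state reduction is legitimate and identical for both systems. In particular, $T_1$ must genuinely be independent of the phase, which requires that the phase shift $\phi_k$ be absorbed by periodicity in the substitution $\tau=u-\phi_k$. I would check this explicitly: integrating a $2\pi$-periodic function over any interval of length $2\pi$ gives the same value. I do not expect any further obstacle.
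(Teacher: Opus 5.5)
Your proposal is correct and is essentially the paper's argument: the corollary is read off directly from the representation $J_g(\phi,a)=T_1(a)\cos\phi$ obtained in the proof of Theorem~\ref{thm:passive_vs_RD_sinusoid}. There $T_1$ depends only on $a$, $\bar p$, $\bar q$, so equal gains give a common factor $T_1(a)\ge 0$, and the comparison reduces to comparing $\cos\phi_1$ with $\cos\phi_2$. One small caution: what makes the steady-state reduction legitimate is not the periodicity shift $\tau=u-\phi_k$, but that the free (transient) response of $g_k$ does not affect the long-run average, which passivity alone does not ensure when $g_k$ has imaginary-axis poles; the paper makes the same implicit assumption.
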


By Theorem~\ref{thm:passive_vs_RD_sinusoid}, any learning dynamics of the form
$\sigma\circ g(s)I_n$ with passive $g(\cdot)$ achieves an asymptotic average
reward no smaller than that of standard RD(corresponding to
$g(s)=1/s$) in the sinusoidal environment
$p(t)=\bar p\sin(\omega t)+\bar q\cos(\omega t)$ for any $\omega>0$, provided
that at least one of $\bar p,\bar q$ does not belong to
$\operatorname{span}\{\mathbf{1}_n\}$. Moreover, the inequality is strict whenever
$g(\cdot)$ is strictly positive real at $\omega$.

Corollary~\ref{cor:phase_dominance_fixed_gain} further shows that, among two
learning dynamics of the form $\sigma\circ g_1(s)I_n$ and
$\sigma\circ g_2(s)I_n$ with identical gain
$|g_1(j\omega)|=|g_2(j\omega)|$, the one with the smaller phase lag (or,
equivalently, the larger value of $\cos\phi$) at the specified frequency
achieves the larger asymptotic average reward, whenever $T_1(a)>0$.

\subsection{Gain Monotonicity at Fixed Phase}

We next examine how the gain $a=|g(j\omega)|$ influences performance when the
phase lag $\phi$ is fixed.
\begin{proposition}
\label{prop:gain_monotonicity_fixed_phase}
Fix $\omega>0$ and let
$p(t)=\bar p\,\sin(\omega t)+\bar q\,\cos(\omega t)$.
For the learning dynamics $\sigma\circ g(s)I_n$ with phase lag
$\phi=-\arg(g(j\omega))$ and gain $a=|g(j\omega)|$, define the asymptotic
average reward $J_g(\phi,a)$. Then
\[
\frac{\partial J_g(\phi,a)}{\partial a}\ge 0.
\]
Moreover, for any fixed $\phi$ with $\cos\phi\ge 0$, $J_g(\phi,a)$ is
nondecreasing in $a$, and is strictly increasing when $\cos\phi>0$ and at least
one of $\bar p$ or $\bar q$ does not belong to
$\operatorname{span}\{\mathbf{1}_n\}$.
\end{proposition}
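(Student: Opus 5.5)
We start from the representation obtained in the proof of Theorem~\ref{thm:passive_vs_RD_sinusoid}, namely $J_g(\phi,a)=T_1(a)\cos\phi$ with
\[
T_1(a)=\frac{1}{2\pi}\int_0^{2\pi} v(\tau)^\top\sigma\bigl(a\,v(\tau)\bigr)\,d\tau,\qquad v(\tau)=\bar p\sin\tau+\bar q\cos\tau .
\]
It follows that $\partial J_g/\partial a=\cos\phi\,T_1'(a)$. The task therefore reduces to proving $T_1'(a)\ge 0$, with strict inequality when $v\not\equiv$ a multiple of $\mathbf{1}_n$. Since $\cos\phi\ge 0$ under the standing passivity assumption, this yields the sign claim. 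The monotonicity claims for fixed $\phi$ with $\cos\phi\ge0$ then follow directly.

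Because the integrand is smooth in $(a,\tau)$ over a compact interval, we may differentiate under the integral. Using the chain rule and the Jacobian formula $\nabla\sigma(v)=\mathrm{diag}(\sigma(v))-\sigma(v)\sigma(v)^\top$, we obtain
\[
T_1'(a)=\frac{1}{2\pi}\int_0^{2\pi} v(\tau)^\top\,\nabla\sigma\bigl(a\,v(\tau)\bigr)\,v(\tau)\,d\tau .
\]
The softmax Jacobian is positive semidefinite, and its kernel is exactly $\operatorname{span}\{\mathbf{1}_n\}$. The integrand is therefore nonnegative pointwise, so $T_1'(a)\ge 0$. This is the same argument used in the oracle dominance proof (Theorem~\ref{thm:oracle_dominates_RD}), with the direction $p$ replaced by $v$.

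For strictness, suppose at least one of $\bar p,\bar q$ lies outside $\operatorname{span}\{\mathbf{1}_n\}$. Then $v(\tau)\notin\operatorname{span}\{\mathbf{1}_n\}$ for all $\tau$ outside a finite set. To see this, project onto the orthogonal complement of $\mathbf{1}_n$: the projected vector $v^\perp(\tau)=\bar p^\perp\sin\tau+\bar q^\perp\cos\tau$ is a nonzero trigonometric vector function, so it has only finitely many zeros on $[0,2\pi)$. On the remaining set, which has full measure, $v^\top\nabla\sigma(av)v>0$ because $\sigma(av)\in\operatorname{Int}(\Delta_n)$. This gives $T_1'(a)>0$, and combined with $\cos\phi>0$ we obtain strict increase of $J_g(\phi,a)$ in $a$.

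The only delicate points are justifying differentiation under the integral and the strict-positivity argument; both are routine. A secondary point concerns the unrestricted inequality $\partial J/\partial a\ge0$. It relies on $\cos\phi\ge0$, which is exactly the passivity assumption in force for the class $\sigma\circ g(s)I_n$. We will state explicitly that this condition is what is used.
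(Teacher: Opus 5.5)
Your proposal is correct and follows essentially the paper's route: both reduce $\partial J_g/\partial a$ to $\cos\phi\cdot\frac{1}{2\pi}\int_0^{2\pi}v(\tau)^\top\nabla\sigma\bigl(a\,v(\tau)\bigr)v(\tau)\,d\tau$ and conclude from $\nabla\sigma\succeq 0$ with kernel $\operatorname{span}\{\mathbf{1}_n\}$. The only difference is that you differentiate the already-factored form $T_1(a)\cos\phi$, so the cross term drops out automatically because $T_2\equiv 0$, whereas the paper differentiates first and then kills the cross term $I_2(a)$ by a separate log-sum-exp argument; your explicit note that the unconditional claim $\partial J_g/\partial a\ge 0$ needs $\cos\phi\ge 0$, and your finite-zero-set argument for strictness, are cleaner than the paper's treatment.
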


\begin{proof}
By Theorem~\ref{thm:passive_vs_RD_sinusoid},
\begin{equation*}
    \begin{aligned}
        J_g(\phi,a)= \frac{1}{2\pi} \int_0^{2\pi} &\bigl(\bar p\sin\tau+\bar q\cos\tau\bigr)^\top  \\
        &\sigma \bigl(a\,\bar p \sin (\tau-\phi)+a\,\bar q \cos(\tau-\phi)\bigr)\, d\tau.
    \end{aligned}
\end{equation*}
Let
\[
u(\tau)=\bar p\sin\tau+\bar q\cos\tau,\qquad
v(\tau)=\bar p \sin (\tau-\phi)+\bar q \cos(\tau-\phi).
\]
Then
\[
v'(\tau)=\bar p \cos(\tau-\phi)- \bar q \sin(\tau-\phi),
\]
and, using angle–addition identities,
\[
\begin{aligned}
u(\tau) &= \cos\phi\, v(\tau) + \sin\phi\, v'(\tau).
\end{aligned}
\]
Differentiate under the integral and apply the chain rule:
\[
\begin{aligned}
\frac{\partial J_g}{\partial a}(\phi,a)
&=\frac{1}{2\pi}\int_{0}^{2\pi} u(\tau)^\top \nabla\sigma\!\big(a\,v(\tau)\big)\,v(\tau)\,d\tau \\
&=\cos\phi\,I_1(a)+\sin\phi\,I_2(a),
\end{aligned}
\]
with
\[
I_1(a)=\frac{1}{2\pi}\int_{0}^{2\pi} v(\tau)^\top \nabla\sigma\!\big(av(\tau)\big)\,v(\tau)\,d\tau\ \ge\ 0,
\]
since $\nabla\sigma(\cdot)\succeq 0$, and
\[
I_2(a)=\frac{1}{2\pi}\int_{0}^{2\pi} v'(\tau)^\top \nabla\sigma\!\big(av(\tau)\big)\,v(\tau)\,d\tau=0.
\]
To see $I_2(a)=0$, compute
\[
\frac{d}{d\tau}\big[v(\tau)^\top \sigma\!\big(av(\tau)\big)\big]
= v'(\tau)^\top \sigma\!\big(av(\tau)\big) + a\,v(\tau)^\top \nabla\sigma\!\big(av(\tau)\big)\,v'(\tau).
\]
Thus,
\begin{equation*}
    \begin{aligned}
     I_2(a)&=\frac{1}{2\pi a}\int_{0}^{2\pi}\frac{d}{d\tau}\big[v(\tau)^\top \sigma\!\big(av(\tau)\big)\big]\,d\tau \\
&-\frac{1}{2\pi a}\int_{0}^{2\pi} v'(\tau)^\top \sigma\!\big(av(\tau)\big)\,d\tau.   
    \end{aligned}
\end{equation*}
Since $v(\tau)^\top \sigma\!\big(av(\tau)\big)$ is $2\pi$-periodic, the first term vanishes,  and
\begin{equation*}
    \begin{aligned}
        \int_{0}^{2\pi} v'(\tau)^\top \sigma\big(av(\tau)\big)\,d\tau
&=\frac{1}{a}\int_{0}^{2\pi}\frac{d}{d\tau}\,\mathrm{lse}\big(a v(\tau)\big)\,d\tau\\
&=\frac{1}{a}\big[\mathrm{lse}\!\big(a v(\tau)\big)\big]_{0}^{2\pi}=0,
    \end{aligned}
\end{equation*}
 Hence $I_2(a)=0$ and
\[
\frac{\partial J_g}{\partial a}(\phi,a)=\cos\phi\,I_1(a)\ \ge\ 0,
\]
with strict positivity when $\cos\phi>0$ and $v(\cdot)\not\notin \mathrm{span}\{\mathbf{1}_n\}$, which holds if at least one of $\bar p$ or $\bar q$ is not in $\mathrm{span}\{\mathbf{1}_n\}$.
\end{proof}
The gain–monotonicity result above immediately yields a performance comparison between learning dynamics that share the same phase lag at a given frequency but differ in gain.
\begin{corollary}
\label{cor:gain_dominance_fixed_phase}
Fix $\omega>0$ and apply the payoff
\[
p(t)=\bar p\,\sin(\omega t)+\bar q\,\cos(\omega t),\qquad \bar p,\bar q\in\mathbb{R}^n.
\]
Consider two learning dynamics $\sigma\circ g_1(s)I_n$ and $\sigma\circ g_2(s)I_n$ with
\begin{equation*}
    \begin{aligned}
        \phi&=\;\arg g_1(j\omega)=\arg g_2(j\omega), \\
        a_k&=|g_k(j\omega)|>0,\ k\in\{1,2\}.
    \end{aligned}
\end{equation*}
If $a_1>a_2$ and $\cos\phi\ge 0$, then
\[
J_{g_1}\ \ge\ J_{g_2}.
\]
Moreover, if $\cos\phi>0$ and at least one of $\bar p$ or $\bar q$ is not in $\mathrm{span}\{\mathbf{1}_n\}$, then the inequality is strict: $J_{g_1}>J_{g_2}$.
\end{corollary}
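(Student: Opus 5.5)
The plan is to read this corollary off Proposition~\ref{prop:gain_monotonicity_fixed_phase} by integrating the derivative bound in the gain variable. The key structural fact, from the proof of Theorem~\ref{thm:passive_vs_RD_sinusoid}, is that under the sinusoidal payoff each learning dynamics $\sigma\circ g_k(s)I_n$ has an asymptotic average reward depending on $g_k$ only through its gain $a_k=|g_k(j\omega)|$ and phase lag $\phi=-\arg g_k(j\omega)$. Thus $J_{g_k}=J_g(\phi,a_k)$ for the single two-variable function $J_g(\phi,a)$ defined there. Since both systems share the same phase, the comparison reduces to comparing one function of $a$ at the two points $a_2<a_1$. (The corollary writes $\phi=\arg g_k$ rather than $-\arg g_k$, but only $\cos\phi$ enters, so the sign convention is immaterial.)

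\textbf{Step 1 (regularity and fundamental theorem of calculus).} With $\phi$ fixed, $a\mapsto J_g(\phi,a)$ is $C^1$ on $(0,\infty)$. The integrand $u(\tau)^\top\sigma(a v(\tau))$ is smooth in $(\tau,a)$ on the compact interval $[0,2\pi]$, which justifies differentiating under the integral, as done in the proof of Proposition~\ref{prop:gain_monotonicity_fixed_phase}. Then
\[
J_{g_1}-J_{g_2}=\int_{a_2}^{a_1}\frac{\partial J_g}{\partial a}(\phi,a)\,da=\cos\phi\int_{a_2}^{a_1} I_1(a)\,da ,
\]
using the identity $\partial J_g/\partial a=\cos\phi\, I_1(a)$ established in that proof, with $I_1(a)\ge 0$. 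Since $\cos\phi\ge 0$ and $a_1>a_2$, the right-hand side is nonnegative, which gives $J_{g_1}\ge J_{g_2}$.

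\textbf{Step 2 (strictness).} Assume $\cos\phi>0$ and, say, $\bar p\notin\operatorname{span}\{\mathbf{1}_n\}$. It suffices to show $I_1(a)>0$ for each $a>0$. The quadratic form $w^\top\nabla\sigma(z)w$ vanishes only for $w\in\operatorname{span}\{\mathbf{1}_n\}$, so we need $v(\tau)\notin\operatorname{span}\{\mathbf{1}_n\}$ on a set of positive measure. The map $\tau\mapsto v(\tau)=\bar p\sin(\tau-\phi)+\bar q\cos(\tau-\phi)$ is continuous. If $v(\tau)\in\operatorname{span}\{\mathbf{1}_n\}$ held for all $\tau$, then taking $\tau-\phi=\pi/2$ and $\tau-\phi=0$ would force both $\bar p$ and $\bar q$ into $\operatorname{span}\{\mathbf{1}_n\}$, a contradiction. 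So at some $\tau_0$ we have $v(\tau_0)^\top\nabla\sigma(av(\tau_0))v(\tau_0)>0$. By continuity this strict inequality persists on a neighborhood of $\tau_0$, hence $I_1(a)>0$. Integrating over $[a_2,a_1]$, which has positive length, then gives $J_{g_1}>J_{g_2}$.

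\textbf{Main obstacle.} The main point needing care is Step~2: the pointwise strict positivity from Proposition~\ref{prop:gain_monotonicity_fixed_phase} has to be upgraded to strict positivity of the integrals in $\tau$ and then in $a$, which is handled by the continuity and positive-measure argument above. A secondary point is the steady-state reduction itself. I will take it as given, exactly as in Theorem~\ref{thm:passive_vs_RD_sinusoid}: for stable (passive) $g_k$ the transients decay, so they do not affect the time average.
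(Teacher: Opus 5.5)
Your proposal is correct and follows essentially the same route as the paper, which simply invokes the monotonicity of $a\mapsto J_g(\phi,a)$ from Proposition~\ref{prop:gain_monotonicity_fixed_phase}. You make that step explicit by integrating $\partial J_g/\partial a=\cos\phi\,I_1(a)$ over $[a_2,a_1]$, and you supply the continuity argument for $I_1(a)>0$ at every $a>0$, which the paper leaves implicit. One small caveat on your side remark: the decay of transients needs asymptotic stability of $g_k$, which passivity alone does not give (e.g.\ $g(s)=1/s$). Since you, like the paper, take the steady-state expression for $J_{g_k}$ as given, this does not affect the argument.
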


\begin{proof}
By Proposition~\ref{prop:gain_monotonicity_fixed_phase}, for fixed phase $\phi$ with $\cos\phi\ge 0$, the map $a\mapsto J_g(\phi,a)$ is nondecreasing (and strictly increasing when $\cos\phi>0$ and $\bar p$ or $\bar q$ has a nonzero component in $T\Delta_n$). Since $a_1>a_2$ and both models share the same $\phi$, the claims follow.
\end{proof}

\subsection{Summary and Visualization}

The preceding results quantify how the asymptotic average reward of learning
dynamics of the form $\sigma\circ g(s)I_n$ depends on the phase lag
$\phi=-\arg(g(j\omega))$ and gain $a=|g(j\omega)|$ of the transfer function
$g(s)$ under sinusoidal payoff environments.

Figure~\ref{fig:asym_avg_reward_phi_a} visualizes the asymptotic average reward \(J(\phi,a)\) attained by learning dynamics of the form \(\sigma\circ g(s)I_n\) in the environment \(p(t)=\bar p\,\sin(\omega t)+\bar q\,\cos(\omega t)\).  The horizontal axis is the phase lag \(\phi=-\arg\big(g(j\omega)\big)\in[-{\pi}/{2},{\pi}/{2}]\) for passive $g(s)$, while each colored curve corresponds to a different gain \(a=\lvert g(j\omega)\rvert\) (with the colorbar indicating increasing \(a\), blue\(\to\)red). 
To read off the predicted performance of a particular model at frequency \(\omega\), compute \(\phi\) and \(a\) from \(g(j\omega)\), then evaluate \(J(\phi,a)\) on the corresponding curve. 
The plot is generated with \(\bar p=\begin{bmatrix}1 & -0.4 & 0.6\end{bmatrix}^{\!\top}\) and \(\bar q=\begin{bmatrix}0.5 & 0.1 & -0.6\end{bmatrix}^{\!\top}\).

\begin{figure}[H]
    \centering
    \includegraphics[width=0.8\linewidth]{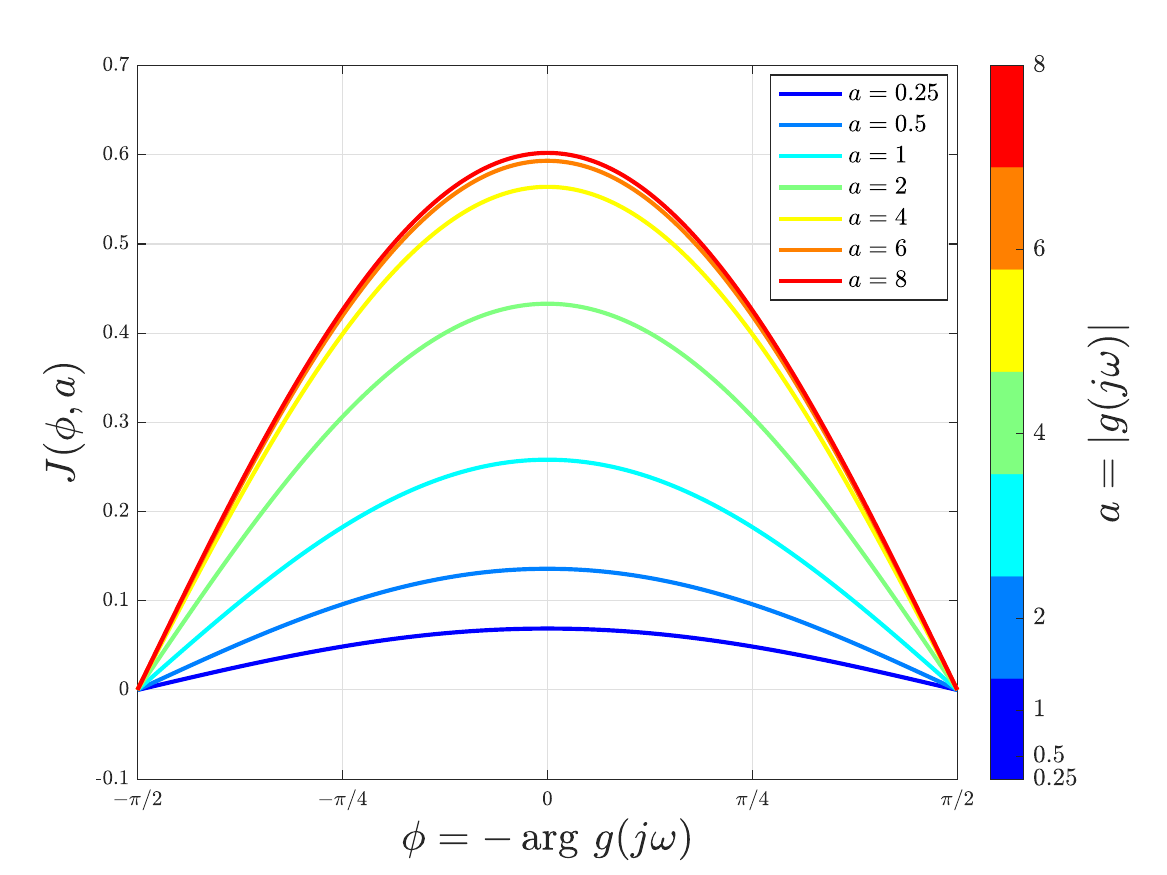}
    \caption{Asymptotic average reward for different values of $\phi=-\arg(g(j\omega))$ and $a=|g(j\omega)|$.}
    \label{fig:asym_avg_reward_phi_a}
\end{figure}

The analysis above studies the effect of  phase lag and gain,  associated with $g(s)$, on the asymptotic performance of the learning model of the form $\sigma \circ g(s)I_n$ under sinusoidal payoff environments. A natural next question is how these insights extend to general payoff trajectories $p(\cdot)$ that are not necessarily sinusoidal. In the next section, we address this question by comparing the performance of the exponential replicator dynamics (Ex--RD, \eqref{eq:EX-RD}) with its predictive variant employing a low--pass filter.
%%%%%%%%%%%%%%%%%%%%%%%%%%%%%%%%%%%%%%%%%%%%%%%%%%%%%%%%%%%%%%%%%%
\section{Performance comparison between Ex-RD and predictive Ex--RD}\label{sec:predictive_Ex_RD_and_Ex_RD}
\subsection{Ex--RD and predictive Ex--RD}

In this section, we compare the performance of the exponential replicator
dynamics (Ex--RD) with its predictive variant. For reference, Ex--RD with
$\lambda=1$ is characterized by
\begin{equation}
\begin{aligned}
\dot{z}&= p - z,\\
x &= \sigma(z).
\end{aligned}
\label{eq:Ex-RD_lambda=1}
\end{equation}
In the predictive Ex--RD, we employ a first--order low–pass predictor; the resulting dynamics are
\begin{equation}
\begin{aligned}
\dot{r} &= p - r,\\
\dot{m} &= p - m,\\
z &= r + m,\\
x &= \sigma(z).
\end{aligned}
\label{eq:predictive_Ex_RD}
\end{equation}

\subsection{Uniform Dominance of Predictive Ex--RD}

The next result shows that predictive Ex--RD uniformly dominates Ex--RD for
any payoff trajectory.

\begin{theorem}
\label{thm:pred-exrd-dominates}
The predictive Ex--RD \eqref{eq:predictive_Ex_RD} uniformly dominates the standard Ex--RD \eqref{eq:Ex-RD_lambda=1}. Specifically, for any payoff trajectory $p(\cdot)$, let $x_{\mathrm{E}}(\cdot)$ denote the strategy generated by Ex--RD and $x_{\mathrm{PE}}(\cdot)$ the strategy generated by predictive Ex--RD. Then,
for all $T>0$,
\[
\int_{0}^{T} p(t)^\top x_{\mathrm{PE}}(t)\,dt
\;\ge\;
\int_{0}^{T} p(t)^\top x_{\mathrm{E}}(t)\,dt,
\]
assuming matched initial conditions $z_{\mathrm{E}}(0)=r(0)=m(0)=\alpha\,\mathbf{1}_n$ for some $\alpha\in\mathbb{R}$. Moreover, the inequality is strict for every $T>0$ unless $p(t)\in\operatorname{span}\{\mathbf{1}_n\}$ for all $t\ge 0$.
\end{theorem}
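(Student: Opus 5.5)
The plan is to exploit the matched initial conditions to reduce the predictive Ex--RD to a rescaled copy of Ex--RD, and then to obtain the reward gap as an exact derivative term plus a nonnegative dissipation term. First I will observe that $r$, $m$ and $z_{\mathrm{E}}$ all obey the same linear ODE $\dot y = p - y$ in \eqref{eq:Ex-RD_lambda=1} and \eqref{eq:predictive_Ex_RD}. With $z_{\mathrm{E}}(0)=r(0)=m(0)=\alpha\mathbf{1}_n$, uniqueness gives $r(t)=m(t)=z_{\mathrm{E}}(t)$ for all $t$. Writing $z\coloneqq z_{\mathrm{E}}$, this means $x_{\mathrm{PE}}=\sigma(2z)$ and $x_{\mathrm{E}}=\sigma(z)$. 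The gap is therefore
\[
\Delta P(T)=\int_0^T p^\top\bigl(\sigma(2z)-\sigma(z)\bigr)\,dt ,
\]
and we may substitute $p=\dot z+z$.

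Next I will split $\Delta P(T)$ into two pieces. For the $\dot z$ piece, define
\[
F(z)\coloneqq \tfrac12\,\mathrm{lse}(2z)-\mathrm{lse}(z).
\]
Since $\sigma=\nabla\mathrm{lse}$, we have $\nabla F(z)=\sigma(2z)-\sigma(z)$, so this piece integrates exactly to $F(z(T))-F(z(0))$. For the $z$ piece, the integrand $z^\top(\sigma(2z)-\sigma(z))=(2z-z)^\top(\sigma(2z)-\sigma(z))$ is nonnegative by monotonicity of the gradient of the convex function $\mathrm{lse}$. Equivalently, using the mean-value representation from the proof of Theorem~\ref{thm:oracle_dominates_RD}, it equals $z^\top\bigl(\int_0^1\nabla\sigma((1+s)z)\,ds\bigr)z\ge 0$, with equality iff $z\in\operatorname{span}\{\mathbf{1}_n\}$.

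The remaining step, and the only slightly delicate one, is the boundary term. At the initial condition, $F(\alpha\mathbf{1}_n)=\tfrac12(2\alpha+\log n)-(\alpha+\log n)=-\tfrac12\log n$. On the other hand, Cauchy--Schwarz gives $\sum_i e^{2z_i}\ge \tfrac1n\bigl(\sum_i e^{z_i}\bigr)^2$, hence $F(z)\ge -\tfrac12\log n$ for every $z$, with equality iff $z\in\operatorname{span}\{\mathbf{1}_n\}$. Thus $F(z(T))-F(z(0))\ge 0$, and combined with the nonnegative dissipation integral this yields $\Delta P(T)\ge0$ for all $T$. It is essential here that the initial score lies in $\operatorname{span}\{\mathbf{1}_n\}$, since that is exactly where $F$ attains its minimum.

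For strictness, I will use the variation-of-constants formula $z(t)=e^{-t}\alpha\mathbf{1}_n+\int_0^t e^{-(t-s)}p(s)\,ds$. Projecting onto $\mathbf{1}_n^\perp$, continuity of $p$ shows that if $p(t_0)\notin\operatorname{span}\{\mathbf{1}_n\}$, then $z(t)\notin\operatorname{span}\{\mathbf{1}_n\}$ on an interval $(t_0,t_0+\epsilon)$. On that interval the dissipation integrand is strictly positive, so $\Delta P(T)>0$ for all $T>t_0$. I should flag a caveat in the stated claim: for $T$ smaller than the first time $p$ leaves $\operatorname{span}\{\mathbf{1}_n\}$, the gap is exactly zero. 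The strictness should therefore be read as holding for every $T$ past that time.
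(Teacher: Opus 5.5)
Your proof is correct, and it reaches the result by a more direct route than the paper.

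\textbf{The paper's route.} The paper uses the mean-value theorem to write $\Delta P(T)=\int_0^1 Q(s)\,ds$. Then, for each fixed $s$, it proves passivity of the auxiliary system $\dot\xi=(1+s)p-\xi$, $y=\tfrac{1}{1+s}\nabla\sigma(\xi)\xi$. The storage function is $V_s=(1+s)^{-2}D_{\mathrm{KL}}\bigl(\sigma((1+s)z_{\mathrm{E}})\,\|\,\tfrac1n\mathbf{1}_n\bigr)$, which vanishes at the uniform initial score.

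\textbf{Your route.} You substitute $p=\dot z+z$ once and notice that $\sigma(2z)-\sigma(z)$ is itself a gradient field. Its closed-form potential is $F(z)=\tfrac12\mathrm{lse}(2z)-\mathrm{lse}(z)=\ln\|\sigma(z)\|_2$, so your Cauchy--Schwarz step is just $\|\sigma(z)\|_2\ge 1/\sqrt{n}$ on the simplex.

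\textbf{Why they agree.} The two arguments are the same identity organized differently. One checks that $\int_0^1 V_s\,ds=F(z)+\tfrac12\ln n$. One also checks that the $s$-averaged dissipation $\int_0^1 z^\top\nabla\sigma((1+s)z)\,z\,ds$ equals your term $z^\top(\sigma(2z)-\sigma(z))$. Both proofs therefore amount to
\[
\Delta P(T)=F\bigl(z(T)\bigr)-F\bigl(z(0)\bigr)+\int_0^T z^\top\bigl(\sigma(2z)-\sigma(z)\bigr)\,dt .
\]
The paper's per-$s$ passivity framing matches its later program of reading dominance as passivity of a comparison system. Your version removes the auxiliary family of systems, gives an exact expression for the gap rather than only a lower bound, and makes the equality case transparent.

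\textbf{Strictness.} That transparency is why your caveat is correct. Suppose $p(t)\in\operatorname{span}\{\mathbf{1}_n\}$ on $[0,t^\ast]$ but not afterwards. Then $z$ stays in $\operatorname{span}\{\mathbf{1}_n\}$ on $[0,t^\ast]$, and $\Delta P(T)=0$ for every $T\le t^\ast$. So ``strict for every $T>0$'' is too strong, and the paper's strictness paragraph misses this. It also calls the integrand $p^\top\nabla\sigma(\xi)z_{\mathrm{E}}$ of $Q(s)$ nonnegative, which it need not be pointwise. Your corrected claim, strictness exactly for $T$ past the first exit time from $\operatorname{span}\{\mathbf{1}_n\}$, is the accurate one.

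The only step worth writing out is the exit argument. If the component of $z(t_0)$ orthogonal to $\mathbf{1}_n$ is nonzero, continuity of $z$ suffices. If it is zero, its time derivative at $t_0$ is the orthogonal component of $p(t_0)$, which is nonzero.
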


\begin{proof}
Let $z_{\mathrm{E}}$ denote the Ex-RD score and
$z_{\mathrm{PE}}$ denotes the predictive Ex-RD score. Since
\[
\dot z_{\mathrm{E}}=p-z_{\mathrm{E}},\qquad
\dot r=p-r,\qquad
\dot m=p-m,
\]
with matched initial conditions $z_{\mathrm{E}}(0)=r(0)=m(0)$, it follows that
$r(t)=m(t)=z_{\mathrm{E}}(t)$ for all $t\ge0$, and hence
\(
z_{\mathrm{PE}}(t)=2z_{\mathrm{E}}(t).
\)
Therefore, the cumulative–reward gap between the predictive Ex-RD and the standard Ex-RD is 
\[
\Delta P(T)
\;=\;
\int_{0}^{T} p(t)^{\top}
\bigl[\sigma \bigl(2\,z_{\mathrm{E}}(t)\bigr)
      -\sigma \bigl(z_{\mathrm{E}}(t)\bigr)\bigr]\,dt.
\]

By the mean–value theorem for vector–valued maps,
\[
\sigma(2z_{\mathrm{E}})-\sigma(z_{\mathrm{E}})
=\int_{0}^{1}\nabla\sigma\bigl((1+s)z_{\mathrm{E}}\bigr)\,z_{\mathrm{E}}\,ds,
\]
so
\[
\Delta P (T)
=\int_{0}^{1}
\underbrace{\int_{0}^{T}
p(t)^{\top}\nabla\sigma\bigl((1+s)z_{\mathrm{E}}(t)\bigr)\,z_{\mathrm{E}}(t)\,dt}_{Q(s)}
\,ds.
\]
It suffices to show that $Q(s)\ge0$ for all $s\in[0,1]$.

Fix $s\in[0,1]$ and define $\xi\coloneqq (1+s)z_{\mathrm{E}}$. Then
\begin{equation}
    \begin{aligned}
        \dot{\xi} & = (1+s) p - \xi \\
          y&=\nabla\sigma\bigl((1+s)z_{\mathrm{E}}\bigr)\,z_{\mathrm{E}}
=\frac{1}{1+s}\,\nabla\sigma(\xi)\,\xi.
    \end{aligned}
    \label{eq:nonlinear_sys_comp_ex}
\end{equation}

We claim passivity of the nonlinear system \eqref{eq:nonlinear_sys_comp_ex} from input $p$ to output $y$.
Consider the storage function
\[
V(\xi)\;\coloneqq\;\frac{1}{(1+s)^2}\,
D_{\mathrm{KL}}\!\Bigl(\sigma(\xi)\,\Big\|\,\tfrac{1}{n}\mathbf{1}_n\Bigr),
\]
which satisfies $V(\xi)\ge0$ and can be written explicitly as
\begin{equation*}
\begin{aligned}
V(\xi)
&=\frac{1}{(1+s)^2}
\sum_{i=1}^{n}\sigma(\xi)_i
\ln\!\left(\frac{\sigma(\xi)_i}{1/n}\right) \\
&=\frac{1}{(1+s)^2}
\Bigl(\ln n + \xi^\top\sigma(\xi)
      -\ln\!\sum_{j=1}^n e^{\xi_j}\Bigr).
\end{aligned}
\end{equation*}

Differentiating along trajectories yields
\[
\begin{aligned}
\dot V
&=\nabla_\xi V(\xi)^\top \dot\xi \\
&=\frac{1}{1+s}
\bigl(\sigma(\xi)+\nabla\sigma(\xi)\,\xi-\sigma(\xi)\bigr)^\top (p-z_{\mathrm{E}}) \\
&=\frac{1}{s+1}p^\top\nabla\sigma(\xi)\,\xi
   -z_{\mathrm{E}}^\top\nabla\sigma(\xi)\,z_{\mathrm{E}}
\;\le\; p^\top y,
\end{aligned}
\]
since $\nabla\sigma(\cdot)\succeq0$ implies
$z_{\mathrm{E}}^\top\nabla\sigma(\xi)\,z_{\mathrm{E}}\ge0$.

Integrating over $[0,T]$ gives
\[
\int_{0}^{T}
p(t)^\top\nabla\sigma\bigl((1+s)z_{\mathrm{E}}(t)\bigr)\,z_{\mathrm{E}}(t)\,dt
\;\ge\;
V\bigl(\xi(T)\bigr)-V\bigl(\xi(0)\bigr).
\]
Under the initialization $z_{\mathrm{E}}(0)=\alpha\,\mathbf{1}_n$ for some $\alpha\in\mathbb{R}$, all score components are initialized equally and no channel is favored. Consequently,
$\xi(0)=(1+s)\alpha\,\mathbf{1}_n$, which implies
$\sigma(\xi(0))=\tfrac{1}{n}\mathbf{1}_n$ and hence $V(\xi(0))=0$.
It follows that $Q(s)\ge 0$ for all $s\in[0,1]$, and therefore
$\Delta P\ge 0$. Equivalently, predictive Ex--RD produces cumulative reward no smaller than that of Ex--RD on any horizon $[0,T]$, i.e., predictive Ex--RD uniformly dominates Ex--RD.

\medskip
\noindent\emph{Strictness.}
Since $\nabla\sigma(\cdot)\succeq 0$ and $\ker\!\bigl(\nabla\sigma(\cdot)\bigr)=\operatorname{span}\{\mathbf{1}_n\}$,
the integrand in $Q(s)$ is nonnegative and vanishes if and only if
$p(t)\in\operatorname{span}\{\mathbf{1}_n\}$ and
$z_{\mathrm{E}}(t)\in\operatorname{span}\{\mathbf{1}_n\}$.
From the Ex--RD dynamics $\dot z_{\mathrm{E}}=p-z_{\mathrm{E}}$, if
$p(t)=\beta(t)\mathbf{1}_n$ for all $t\ge 0$, then
$z_{\mathrm{E}}(t)=\gamma(t)\mathbf{1}_n$ for all $t\ge 0$, where
\[
\gamma(t)=\int_{0}^{t} e^{-(t-\tau)}\beta(\tau)\,d\tau.
\]
Conversely, if $p(t)$ has a nonzero component orthogonal to $\mathbf{1}_n$ on a
set of nonzero measure in time, then the integrand is strictly positive . Consequently, $Q(s)=0$ if and only if
$p(t)\in\operatorname{span}\{\mathbf{1}_n\}$ for all $t\ge 0$, in which case
$\Delta P=0$; otherwise, $\Delta P>0$.
\end{proof}
\subsection{Frequency-Domain Interpretation}
The predictive Ex--RD \eqref{eq:predictive_RD_low_pass} and the standard Ex--RD \eqref{eq:Ex-RD_lambda=1} can both be represented as cascade interconnections between diagonal LTI systems $g(s)I_n$ and the static softmax mapping $\sigma(\cdot)$. Specifically, predictive Ex-RD corresponds to $g_{\mathrm{PE}}(s)=\tfrac{2}{s+1} $, while Ex-RD corresponds to $g_{\mathrm{E}}(s)=\tfrac{1}{s+1}$. For any frequency $\omega\in\mathbb{R}$, the two transfer functions have identical phase lag, $\phi=\arctan(\omega)$, but satisfy
$|g_{\mathrm{PE}}(j\omega)|=2|g_{\mathrm{E}}(j\omega)|$. By
Corollary~\ref{cor:gain_dominance_fixed_phase}, predictive Ex--RD therefore achieves a strictly larger asymptotic average reward than Ex-RD, in agreement with Theorem~\ref{thm:pred-exrd-dominates}.

While Theorem~\ref{thm:pred-exrd-dominates} establishes that one learning algorithm (predictive Ex--RD) can uniformly dominate another (standard Ex--RD) in any payoff environment, neither algorithm satisfies a finite-regret guarantee \cite{abdelraouf2025passivity}.

This raises the main fundamental question: \emph{can a learner regret not using a no-regret algorithm?} Equivalently, does there exist a no-regret learning dynamics that uniformly outperforms another no-regret dynamics across all environments?

To address this question, we turn next to the comparison between replicator dynamics and its higher-order variants. In particular, we study the \emph{anticipatory replicator dynamics}, which is equivalent to predictive replicator dynamics equipped with a first--order low--pass predictor (see Remark~\ref{rem:anticipatory_predicitive}).
%%%%%%%%%%%%%%%%%%%%%%%%%%%%%%%%%%%%%%%%%%%%%%%%%%%%%%%%%%%%
\section{Performance Comparison Between Replicator Dynamics and Its Higher-Order Variants}\label{sec:predictive_RD_and_RD}

In this section, we compare the performance of standard replicator dynamics (RD) with its higher-order, payoff--based variants. Our focus is on the anticipatory replicator dynamics \eqref{eq:anticipatory_RD}. We establish three results: (i) for constant payoffs $p(t)\equiv\bar p$, anticipatory RD uniformly dominates RD; (ii) locally, near an equilibrium, anticipatory RD uniformly dominates RD, and this result extends to any predictive variant with a passive asymptotically stable predictor; and
(iii) by formulating the global comparison as an optimal--control problem, we show that anticipatory RD uniformly dominates RD along arbitrary payoff trajectories.

Throughout this section, we denote by $(z_{\mathrm{R}},x_{\mathrm{R}})$ the score and strategy generated by the standard RD, and by
$(z_{\mathrm{PR}},x_{\mathrm{PR}})$ the corresponding quantities generated by the predictive (anticipatory) replicator dynamics. 

\subsection{Fixed-payoff performance comparison}
\label{subsec:performance_comparison_fixed}
We begin with the simplest setting in which the payoff is constant.
\begin{proposition}
\label{prop:fixed_payoff_PRD}
Let $p(t)\equiv\bar p\in\mathbb{R}^n$ be constant. The predictive replicator
dynamics with a low--pass predictor \eqref{eq:predictive_RD_low_pass} uniformly
dominates the standard replicator dynamics \eqref{eq:RD}, i.e.,
\[
\int_{0}^{T} \bar p^\top x_{\mathrm{PR}}(t)\,dt
\;\ge\;
\int_{0}^{T} \bar p^\top x_{\mathrm{R}}(t)\,dt
\qquad \text{for all } T>0,
\]
assuming matched initial conditions $r(0)=z_{\mathrm{R}}(0)$ and
$m(0)=\mathbf{0}_n$. Moreover, the inequality is strict if and only if
$\bar p\notin \operatorname{span}\{\mathbf{1}_n\}$.
\end{proposition}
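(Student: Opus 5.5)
The plan is to solve both dynamics in closed form, exploiting the fact that the payoff is constant, and then reduce the comparison to the pointwise positive-semidefiniteness argument already used in Theorem~\ref{thm:oracle_dominates_RD}.

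\textbf{Closed-form scores.} With $p(t)\equiv\bar p$, the standard RD \eqref{eq:RD} gives
\[
z_{\mathrm{R}}(t)=z_{\mathrm{R}}(0)+t\,\bar p .
\]
In \eqref{eq:predictive_RD_low_pass} we have $r(t)=r(0)+t\,\bar p$. The predictor state solves the linear ODE $\dot m=\gamma\lambda\bar p-\lambda m$ with $m(0)=\mathbf{0}_n$, so
\[
m(t)=c(t)\,\bar p,\qquad c(t):=\gamma\bigl(1-e^{-\lambda t}\bigr).
\]
Here $c(0)=0$ and $c(t)>0$ for $t>0$. Using the matched initialization $r(0)=z_{\mathrm{R}}(0)$, this yields the key identity
\[
z_{\mathrm{PR}}(t)=z_{\mathrm{R}}(t)+c(t)\,\bar p\qquad\text{for all } t\ge 0.
\]
This is the oracle structure of Theorem~\ref{thm:oracle_dominates_RD}, with the perturbation $p(t)$ replaced by the nonnegative multiple $c(t)\bar p$ of the payoff itself.

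\textbf{Pointwise reward gap.} Apply the mean-value theorem for $\sigma$ exactly as in the proof of Theorem~\ref{thm:oracle_dominates_RD}. The instantaneous reward gap becomes
\[
\bar p^\top\bigl(x_{\mathrm{PR}}(t)-x_{\mathrm{R}}(t)\bigr)
= c(t)\int_0^1 \bar p^\top\nabla\sigma\bigl(z_{\mathrm{R}}(t)+s\,c(t)\bar p\bigr)\,\bar p\,ds .
\]
Since $\nabla\sigma(\cdot)\succeq 0$ and $c(t)\ge 0$, the integrand is nonnegative for every $t$. Integrating over $[0,T]$ gives the uniform dominance inequality for all $T>0$.

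\textbf{Strictness.} This is the only step requiring care, and it is elementary.

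If $\bar p\in\operatorname{span}\{\mathbf{1}_n\}$, then shift invariance of $\sigma$ gives $\sigma(z_{\mathrm{R}}+c\bar p)=\sigma(z_{\mathrm{R}})$. Hence the gap is identically zero and equality holds.

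Conversely, suppose $\bar p\notin\operatorname{span}\{\mathbf{1}_n\}$. Since $\ker\nabla\sigma(v)=\operatorname{span}\{\mathbf{1}_n\}$ for every $v$, we have $\bar p^\top\nabla\sigma(\cdot)\bar p>0$ everywhere. Combined with $c(t)>0$ for all $t>0$, the integrand is strictly positive on $(0,T]$. Therefore the cumulative gap is strictly positive for every $T>0$, which establishes the ``if and only if'' claim.
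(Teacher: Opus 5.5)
Your proposal is correct and takes essentially the same route as the paper: you solve for the closed form $z_{\mathrm{PR}}-z_{\mathrm{R}}=\gamma(1-e^{-\lambda t})\bar p$, apply the integral mean-value theorem, and use positive semidefiniteness of $\nabla\sigma$ with kernel $\operatorname{span}\{\mathbf{1}_n\}$ to get both the inequality and its strictness. Your explicit handling of the equality case ($\bar p\in\operatorname{span}\{\mathbf{1}_n\}$) via shift invariance is a small improvement, since the paper leaves that direction of the ``if and only if'' implicit.
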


\begin{proof}
For standard RD, the score satisfies
\[
z_{\mathrm{R}}(t)=z_{\mathrm{R}}(0)+\int_{0}^{t} \bar p\,d\tau
= z_{\mathrm{R}}(0)+t\,\bar p.
\]
For the predictive replicator dynamics with a low--pass predictor,
\[
m(t)=e^{-\lambda t}m(0)
+\int_{0}^{t} e^{-\lambda(t-\tau)}\gamma\lambda\,\bar p\,d\tau
=\gamma\bigl(1-e^{-\lambda t}\bigr)\bar p,
\]
where we used $m(0)=\mathbf{0}_n$. With the matched initialization
$r(0)=z_{\mathrm{R}}(0)$, it follows that
\[
z_{\mathrm{PR}}(t)-z_{\mathrm{R}}(t)=m(t)
=\gamma\bigl(1-e^{-\lambda t}\bigr)\bar p.
\]
 The cumulative reward difference at any $T>0$ is defined as 
\[
\Delta P(T)
\;\coloneqq\;
\int_{0}^{T}
\bar p^\top\!\bigl(\sigma(z_{\mathrm{PR}}(t))
-\sigma(z_{\mathrm{R}}(t))\bigr)\,dt.
\]
By the integral mean--value theorem for vector--valued maps,
\[
\sigma(z_{\mathrm{PR}})-\sigma(z_{\mathrm{R}})
=\int_{0}^{1}
\nabla \sigma(\xi)
\,(z_{\mathrm{PR}}-z_{\mathrm{R}})\,ds.
\]
where $\xi = z_{\mathrm{R}}+s(z_{\mathrm{PR}}-z_{\mathrm{R}})$. Then, substituting into  $\Delta P$ yields
\[
\Delta P(T)
=\int_{0}^{T}\!\int_{0}^{1}
\gamma\bigl(1-e^{-\lambda t}\bigr)
\,\bar p^\top
\nabla\sigma\bigl(\xi\bigr)
\,\bar p
\,ds\,dt.
\]

Since $\nabla\sigma(\cdot)\succeq 0$ with
$\ker\nabla\sigma(\cdot)=\operatorname{span}\{\mathbf{1}_n\}$ and $\gamma(1-e^{-\lambda t})>0$ for all $t>0$, the integrand is
nonnegative for all $(t,s)$, and strictly positive whenever
$\bar p\notin\operatorname{span}\{\mathbf{1}_n\}$. Hence $\Delta P\ge 0$, with strict inequality if and only if $\bar p\notin\operatorname{span}\{\mathbf{1}_n\}$. This proves the claim.
\end{proof}

\subsection{Local performance comparison via a passivity viewpoint}
\label{subsec:performance_comparison_local}

Proposition~\ref{prop:fixed_payoff_PRD} establishes that anticipatory RD uniformly dominates standard RD in the case of a constant payoff vector. Our broader objective, however, is to compare the performance of the two learning dynamics under \emph{arbitrary} payoff trajectories.

To this end, we introduce a comparison system whose input is the payoff signal
$p$ and whose output is the difference between the strategies generated by the
anticipatory and standard replicator dynamics. This system is described by
\begin{equation}
\label{eq:nonlinear_sys_comparison}
\begin{aligned}
\dot z_{\mathrm{R}} &= p,\\
\dot m &= \gamma\lambda\,p - \lambda m,\\
y &= \sigma(z_{\mathrm{R}}+m)-\sigma(z_{\mathrm{R}}),
\end{aligned}
\end{equation}
where $y = x_{\mathrm{PR}}-x_{\mathrm{R}}$ denotes the strategy difference.

If the comparison system \eqref{eq:nonlinear_sys_comparison} is passive from the
input $p$ to the output $y$, then there exists a storage function $V$ such that
\small{
\[
\int_{0}^{T} p(t)^\top
\bigl(\underbrace{\sigma(z_{\mathrm{R}}(t)+m(t))}_{x_{\mathrm{PR}}(t)}
      -\underbrace{\sigma(z_{\mathrm{R}}(t))}_{x_{\mathrm{R}}(t)}\bigr)\,dt
\;\ge\; -V\bigl(z_{\mathrm{R}}(0),m(0)\bigr)
\]}
\normalsize
\noindent for all $T>0$. Under unbiased initialization, i.e.,
$z_{\mathrm{R}}(0)=m(0)=\mathbf{0}_n$, one has $V(\mathbf{0}_n,\mathbf{0}_n)=0$,
and passivity of the comparison system implies that anticipatory RD uniformly
dominates standard RD for any payoff trajectory.

Thus, the performance comparison problem can be reformulated as a passivity
analysis of the nonlinear system \eqref{eq:nonlinear_sys_comparison}. The
presence of the softmax nonlinearity in the output equation, however, makes a
global passivity analysis challenging. We therefore proceed by studying the
passivity of the \emph{linearized} comparison system about an equilibrium
trajectory and translating this local passivity property into a local
performance dominance result.

\begin{theorem}
\label{thm:local_dominance_uniform}
Consider the comparison system \eqref{eq:nonlinear_sys_comparison} and the
operating trajectory
\[
p^*(t)=\mathbf{1}_n,\quad
m^*(t)=\gamma\,\mathbf{1}_n,\quad
z_{\mathrm{R}}^*(t)=t\,\mathbf{1}_n,\quad
y^*(t)=\mathbf{0}_n.
\]
Let $p(t)=p^*(t)+\Delta p(t)$ and $y(t)=y^*(t)+\Delta y(t)$ denote the input and
output deviations. Then the linearization of
\eqref{eq:nonlinear_sys_comparison} about
$\bigl(p^*,m^*,z_{\mathrm{R}}^*\bigr)$ is passive. Consequently, anticipatory RD
uniformly dominates standard RD locally, i.e., for all sufficiently small
$\Delta p$,
\[
\int_{0}^{T}
\bigl(p^*(t)+\Delta p(t)\bigr)^{\!\top}
\bigl(y^*(t)+\Delta y(t)\bigr)\,dt
\;\ge\; 0
\qquad \forall T>0.
\]
\end{theorem}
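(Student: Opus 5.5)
The plan is to compute the linearization of \eqref{eq:nonlinear_sys_comparison} about $(p^*,m^*,z_{\mathrm{R}}^*)$ explicitly, show that it is a stable first-order filter followed by a constant positive semidefinite gain, and exhibit a quadratic storage function. The key simplification is shift invariance of the softmax. Along the operating trajectory, both $z_{\mathrm{R}}^*(t)+m^*(t)=(t+\gamma)\mathbf{1}_n$ and $z_{\mathrm{R}}^*(t)=t\mathbf{1}_n$ map to $\tfrac1n\mathbf{1}_n$. Hence both Jacobians equal the constant matrix
\[
J_0\coloneqq\nabla\sigma\bigl(\alpha\mathbf{1}_n\bigr)=\tfrac1n I_n-\tfrac1{n^2}\mathbf{1}_n\mathbf{1}_n^\top ,
\]
for any $\alpha\in\mathbb{R}$. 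This matrix is symmetric, positive semidefinite, and has kernel $\operatorname{span}\{\mathbf{1}_n\}$.

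With deviations $\Delta z=z_{\mathrm{R}}-z_{\mathrm{R}}^*$ and $\Delta m=m-m^*$, the linearized system is
\[
\Delta\dot z=\Delta p,\qquad \Delta\dot m=\gamma\lambda\,\Delta p-\lambda\,\Delta m,\qquad \Delta y=J_0(\Delta z+\Delta m)-J_0\Delta z=J_0\,\Delta m .
\]
The integrator state cancels in the output. The input--output map is therefore $\Delta p\mapsto\Delta y$ with transfer matrix $\tfrac{\gamma\lambda}{s+\lambda}J_0$. Passivity follows from the frequency-domain test in the Remark on LTI passivity: $\operatorname{Re}\tfrac{\gamma\lambda}{j\omega+\lambda}=\tfrac{\gamma\lambda^2}{\lambda^2+\omega^2}\ge0$ and $J_0\succeq0$.

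To get an explicit dissipation inequality with zero initial storage, I will use
\[
V(\Delta m)=\tfrac{1}{2\gamma\lambda}\,\Delta m^\top J_0\,\Delta m .
\]
Along trajectories this gives
\[
\dot V=\tfrac1{\gamma\lambda}\Delta m^\top J_0(\gamma\lambda\Delta p-\lambda\Delta m)=\Delta p^\top\Delta y-\tfrac1\gamma\,\Delta m^\top J_0\Delta m\le \Delta p^\top\Delta y .
\]
With matched initialization $\Delta m(0)=\mathbf{0}_n$ (that is, $m(0)=\gamma\mathbf{1}_n$, or equivalently an unbiased start), integration yields $\int_0^T\Delta p^\top\Delta y\,dt\ge V(\Delta m(T))\ge0$.

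The final step is to pass from passivity of the deviation system to the stated inequality. Expanding gives
\[
\int_0^T(p^*+\Delta p)^\top(y^*+\Delta y)\,dt=\int_0^T\mathbf{1}_n^\top J_0\Delta m\,dt+\int_0^T\Delta p^\top\Delta y\,dt ,
\]
and the first term vanishes because $\mathbf{1}_n^\top J_0=0$. More strongly, $\mathbf{1}_n^\top y\equiv0$ holds exactly for the nonlinear system, since both strategies lie in $\Delta_n$. So the exact reward gap is $\int_0^T\Delta p^\top y\,dt$, and its first-order approximation is the nonnegative quantity above.

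I expect the main obstacle to be making ``for all sufficiently small $\Delta p$'' rigorous. The integrator state $\Delta z$ can drift, so the Jacobian $\nabla\sigma(z_{\mathrm{R}})$ need not stay near $J_0$ even when $\Delta p$ is small. I would state the result as a first-order (linearized) dominance claim. To bound the remainder, I would use Lemma~\ref{lem:softmax_lipschitz} together with $y=\int_0^1\nabla\sigma(z_{\mathrm{R}}+s m)\,ds\;\Delta m$, which controls the error by $\|\Delta m\|$ and the deviation of the Jacobian from $J_0$. The strict dissipation term $\tfrac1\gamma\Delta m^\top J_0\Delta m$ would then absorb the quadratic remainder while $\Delta p$, and hence $\Delta z$ and $\Delta m$, remain small.
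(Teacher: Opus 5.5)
Your linearization, the quadratic storage function, and the reduction of the reward gap to $\int_0^T\Delta p^\top\Delta y\,dt$ are the paper's argument. You use full coordinates with $V=\tfrac{1}{2\gamma\lambda}\Delta m^\top J_0\Delta m$ instead of the paper's reduction to $T\Delta_n$ through $N$. This is equivalent, since $\Delta m^\top J_0\Delta m=\tfrac1n\|N^\top\Delta m\|^2$, and slightly cleaner, because it avoids the paper's tacit assumption $\mathbf{1}_n^\top\Delta p=0$. That part is correct.

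The gap is the step that turns this into the nonlinear inequality the theorem actually asserts; your absorption sketch would not work as written. Let $\bar J(t)\coloneqq\int_0^1\nabla\sigma(\Delta z+s\Delta m)\,ds$. Shift invariance gives exactly $y=\bar J\Delta m$, so the remainder is $\int_0^T\Delta p^\top(\bar J-J_0)\Delta m\,dt$. There are two problems.
\begin{itemize}
\item The remainder is linear in $\Delta p$, while the dissipation $\tfrac1\gamma\Delta m^\top J_0\Delta m$ controls only $\Delta m$. Because $\Delta m$ is a low-pass-filtered copy of $\Delta p$, $\|\Delta m\|$ can be much smaller than $\|\Delta p\|$ (e.g.\ for rapidly oscillating $\Delta p$). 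So neither a pointwise nor a Cauchy--Schwarz estimate of the form $\varepsilon\|\Delta p\|\,\|\Delta m\|\le c\|\Delta m\|^2$ is available.
\item $\|\bar J-J_0\|$ is small only while the part of $\Delta z$ orthogonal to $\mathbf{1}_n$ stays small. As you noted yourself, $\Delta z=\int\Delta p$ drifts, so ``hence $\Delta z$ \ldots\ remain small'' does not follow from small $\Delta p$.
\end{itemize}
The paper's closing step, the bound $\|\Delta y\|_{2,[0,T]}\le\tfrac{\gamma}{2}\|\Delta p\|_{2,[0,T]}$ via Lemma~\ref{lem:softmax_lipschitz}, does not control the sign of the gap either. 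So your instinct to flag this step is right.

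The missing idea is to exploit the filter structure rather than estimate. Substitute $\gamma\lambda\,\Delta p=\dot{\Delta m}+\lambda\Delta m$ and integrate by parts against $\bar J$ itself, not $J_0$. Using $\mathbf{1}_n^\top y=0$ and $\Delta m(0)=\mathbf{0}_n$,
\[
\int_0^T p^\top y\,dt=\frac{1}{2\gamma\lambda}\,\Delta m(T)^\top\bar J(T)\,\Delta m(T)+\frac{1}{\gamma}\int_0^T\Delta m^\top\Bigl(\bar J-\frac{1}{2\lambda}\dot{\bar J}\Bigr)\Delta m\,dt .
\]
For fixed $w$, two facts give a relative bound on how fast the Jacobian's quadratic form changes:
\begin{itemize}
\item $w^\top\nabla\sigma(v)w=\sum_i\sigma_i(v)\bigl(w_i-\sigma(v)^\top w\bigr)^2$;
\item $\dot\sigma_i=\sigma_i(\dot v_i-\sigma^\top\dot v)$.
\end{itemize}
Together they yield $\bigl|\tfrac{d}{dt}\,w^\top\nabla\sigma(v(t))\,w\bigr|\le 2\|\dot v\|_\infty\,w^\top\nabla\sigma(v)\,w$. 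This bound is relative to the current Jacobian, so it does not care how far $\Delta z$ has drifted. Here $\|\cdot\|_\infty$ is the max over components, and for signals also the sup over $t$.

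Now take $v_s=\Delta z+s\Delta m$ and use $\|\Delta m\|_\infty\le\gamma\|\Delta p\|_\infty$. Then $\|\dot v_s\|_\infty\le(1+2\gamma\lambda)\|\Delta p\|_\infty$. Hence $\|\Delta p\|_\infty\le\lambda/(1+2\gamma\lambda)$ makes the integrand nonnegative, and the gap is $\ge 0$ for every $T>0$. This gives ``sufficiently small $\Delta p$'' a precise meaning and proves the nonlinear claim, which neither your sketch nor the paper's argument does.
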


\begin{proof}
We establish local passivity of the comparison system
\eqref{eq:nonlinear_sys_comparison} about a nominal trajectory and then translate this property into a local reward dominance result.

We linearize the comparison system \eqref{eq:nonlinear_sys_comparison} about the
uniform operating trajectory $(p^*,m^*,z_{\mathrm{R}}^*)$ s. Along this trajectory,
\[
x_{\mathrm{R}}^*(t)=x_{\mathrm{PR}}^*(t)=\tfrac{1}{n}\mathbf{1}_n,
\qquad
y^*(t)=\mathbf{0}_n.
\]

Let
\(
\Delta z_{\mathrm{R}}=z_{\mathrm{R}}-z_{\mathrm{R}}^*,\;
\Delta m=m-m^*,\;
\Delta p=p-p^*,\;
\Delta y=y-y^*
\)
denote deviations. Linearizing
\eqref{eq:nonlinear_sys_comparison} yields
\[
\dot{\Delta z_{\mathrm{R}}}=\Delta p,\quad
\dot{\Delta m}=\gamma\lambda\,\Delta p-\lambda\,\Delta m,\quad
\Delta y=\Delta x_{\mathrm{PR}}-\Delta x_{\mathrm{R}},
\]
where
\[
\Delta x_{\mathrm{PR}}
=\nabla\sigma(z_{\mathrm{R}}^*+m^*)(\Delta z_{\mathrm{R}}+\Delta m),
\quad
\Delta x_{\mathrm{R}}
=\nabla\sigma(z_{\mathrm{R}}^*)\,\Delta z_{\mathrm{R}}.
\]

Using the identity
\(
\nabla\sigma(v)=\mathrm{diag}(\sigma(v))-\sigma(v)\sigma(v)^\top
\)
and the translation invariance
\(
\sigma(v+\alpha\mathbf{1}_n)=\sigma(v)
\),
the Jacobian along the nominal trajectory is constant:
\[
S \;\coloneqq\;
\nabla\sigma(z_{\mathrm{R}}^*)
=\nabla\sigma(z_{\mathrm{R}}^*+m^*)
=\frac{1}{n}I_n-\frac{1}{n^2}\mathbf{1}_n\mathbf{1}_n^\top.
\]
Hence,
\[
\Delta x_{\mathrm{R}}=S\,\Delta z_{\mathrm{R}},\quad
\Delta x_{\mathrm{PR}}=S(\Delta z_{\mathrm{R}}+\Delta m),\quad
\Delta y=S\,\Delta m.
\]

\paragraph*{Tangent–space reduction.}
Admissible deviations of strategies lie in the tangent space of the simplex,
\(
T\Delta_n.
\)
Let $N\in\mathbb{R}^{n\times(n-1)}$ have orthonormal columns spanning $T\Delta_n$,
so that $\mathbf{1}_n^\top N=0$ and $N^\top N=I_{n-1}$. We express deviations as
\[
\Delta z_{\mathrm{R}}=N\delta z,\quad
\Delta m=N\delta m,\quad
\Delta p=N\delta p,\quad
\Delta y=N\delta y.
\]
Since $SN=\tfrac{1}{n}N$, the reduced dynamics become
\[
\dot{\delta m}=\gamma\lambda\,\delta p-\lambda\,\delta m,
\qquad
\delta y=\frac{1}{n}\,\delta m.
\]
Thus, the linearized map from $\delta p$ to $\delta y$ is an LTI system with
transfer function
\[
\frac{\gamma\lambda}{n}\,\frac{1}{s+\lambda}I_{n-1},
\]
which is strictly passive for $\gamma,\lambda>0$. Define the quadratic storage function
\[
V(\delta m)\;\coloneqq\;
\frac{1}{2n\gamma\lambda}\,\|\delta m\|^2.
\]
Along trajectories,
\[
\dot V
=\frac{1}{n}\,\delta m^\top\delta p
-\frac{1}{n\gamma}\,\|\delta m\|^2
\;\le\;
\delta p^\top\delta y,
\]
since $\delta p^\top\delta y=\tfrac{1}{n}\delta m^\top\delta p$. Integrating over
$[0,T]$ yields the passivity inequality
\[
\int_0^T \delta p(t)^\top\delta y(t)\,dt
\;\ge\;
V(\delta m(T))-V(\delta m(0)).
\]
For zero initial deviations, $\delta m(0)=0$, we obtain
\[
\int_0^T \delta p^\top\delta y\,dt \;\ge\; 0,
\]
with strict inequality for any nontrivial perturbation $\delta p$, due to the
strict dissipation term.

\paragraph*{Translation to reward dominance.}
Writing
\(
p=\mathbf{1}_n+N\delta p
\),
\(
x_{\mathrm{R}}=\tfrac{1}{n}\mathbf{1}_n+N\delta x_{\mathrm{R}}
\),
and
\(
x_{\mathrm{PR}}=\tfrac{1}{n}\mathbf{1}_n+N\delta x_{\mathrm{PR}}
\),
the reward gap satisfies
\[
\begin{aligned}
\Delta P
&=\int_0^T p^\top(x_{\mathrm{PR}}-x_{\mathrm{R}})\,dt
=\int_0^T \delta p^\top\delta y\,dt
\;\ge\;0,
\end{aligned}
\]
with strict inequality for any nonzero $\delta p$. This proves local uniform
dominance.

\paragraph*{Small–signal boundedness.}
For the nonlinear system with $p=\mathbf{1}_n+\Delta p$, the deviation dynamics satisfy
\begin{align*}
\dot{\Delta z_{\mathrm{R}}}&=\Delta p,\\
\dot{\Delta m}&=\gamma\lambda\,\Delta p-\lambda\,\Delta m,\\
\Delta y&=\sigma(\Delta z_{\mathrm{R}}+\Delta m)-\sigma(\Delta z_{\mathrm{R}}).
\end{align*}
By Lemma~\ref{lem:softmax_lipschitz},
\(
\|\Delta y(t)\|_2\le\tfrac{1}{2}\|\Delta m(t)\|_2
\)
pointwise. Moreover, the map from $\Delta p$ to $\Delta m$ has transfer function
$g(s)=\tfrac{\gamma\lambda}{s+\lambda}$ with $\|g\|_\infty=\gamma$. Consequently,
\[
\|\Delta y\|_{2,[0,T]}
\;\le\;\frac{\gamma}{2}\,\|\Delta p\|_{2,[0,T]},
\]
showing that sufficiently small payoff deviations produce proportionally small
output deviations. This justifies the local passivity–based dominance result.
\end{proof}
The local dominance result established above for predictive RD with a first--order low--pass predictor (anticipatory RD) can be extended to predictive RD equipped with an arbitrary passive, asymptotically stable predictor. In this more general setting, the prediction mechanism is described by a stable linear system driven by the payoff signal. Specifically, consider the comparison system
\begin{equation}
\label{eq:nonlinear_genearalized_comparison}
\begin{aligned}
\dot z_{\mathrm{R}} &= p,\\
\dot x_h &= (A_h\!\otimes\! I_n)\,x_h + (B_h\!\otimes\! I_n)\,p,\\
m &= (C_h\!\otimes\! I_n)\,x_h,\\
y &= \sigma(z_{\mathrm{R}}+m)-\sigma(z_{\mathrm{R}}),
\end{aligned}
\end{equation}
where $x_h\in\mathbb{R}^{mn}$ denotes the predictor state and
$h(s)=C_h(sI_m-A_h)^{-1}B_h$ is the associated prediction transfer function.

\begin{theorem}
\label{thm:local_dominance_general_predictor}
Let the prediction transfer function
\(
h(s)=C_h(sI_m-A_h)^{-1}B_h
\) be passive and asymptotically stable. Then the generalized predictive RD \eqref{eq:predictive_RD} uniformly dominates the standard RD locally, in the sense that the comparison system
\eqref{eq:nonlinear_genearalized_comparison} is locally passive from the payoff input $p$ to the strategy difference output $y$.
\end{theorem}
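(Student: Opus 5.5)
I will mirror the proof of Theorem~\ref{thm:local_dominance_uniform}, replacing the scalar low--pass predictor by the general passive, asymptotically stable $h(s)$. The first step is to pick the operating trajectory. I take $p^*(t)=\mathbf{1}_n$ and $z_{\mathrm{R}}^*(t)=t\,\mathbf{1}_n$. Since $A_h$ is Hurwitz, the predictor state has a unique steady state $x_h^*=-(A_h^{-1}B_h)\otimes\mathbf{1}_n$, which gives $m^*=h(0)\mathbf{1}_n$. Both $z_{\mathrm{R}}^*$ and $m^*$ lie along $\mathbf{1}_n$, so translation invariance of $\sigma$ gives $x_{\mathrm{R}}^*=x_{\mathrm{PR}}^*=\tfrac1n\mathbf{1}_n$ and $y^*=\mathbf{0}_n$. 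The Jacobian at both points is the same constant matrix $S=\tfrac1nI_n-\tfrac1{n^2}\mathbf{1}_n\mathbf{1}_n^\top$.

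Next I linearize \eqref{eq:nonlinear_genearalized_comparison}. This gives $\dot{\Delta z_{\mathrm{R}}}=\Delta p$ and $\dot{\Delta x_h}=(A_h\otimes I_n)\Delta x_h+(B_h\otimes I_n)\Delta p$, with $\Delta m=(C_h\otimes I_n)\Delta x_h$. For the output, $\Delta y=S(\Delta z_{\mathrm{R}}+\Delta m)-S\Delta z_{\mathrm{R}}=S\Delta m$; the integrator state cancels, exactly as before. I then reduce to the tangent space. With $N$ as in the previous proof I write $\Delta p=N\delta p$ and $\Delta x_h=(I_m\otimes N)\delta x_h$. The Kronecker structure commutes with $N$, and $SN=\tfrac1nN$, so the reduced map from $\delta p$ to $\delta y$ is $\tfrac1n h(s)I_{n-1}$, realized by $(A_h\otimes I_{n-1},B_h\otimes I_{n-1},\tfrac1n C_h\otimes I_{n-1})$.

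Passivity of this map follows from positive realness of $h$, since a positive scalar times $h$ is still positive real. To get an explicit storage function, I would invoke the KYP (positive real) lemma for the minimal stable realization of $h$. It gives $P=P^\top\succ0$ with $A_h^\top P+PA_h\preceq0$ and $PB_h=C_h^\top$. Then $V(\delta x_h)=\tfrac1{2n}\delta x_h^\top(P\otimes I_{n-1})\delta x_h$ satisfies $\dot V\le\delta p^\top\delta y$. Integrating with zero initial deviation gives $\int_0^T\delta p^\top\delta y\,dt\ge0$. The reward translation follows the earlier proof: $\mathbf{1}_n^\top\Delta y=0$, so $p^\top(x_{\mathrm{PR}}-x_{\mathrm{R}})$ reduces to $\delta p^\top\delta y$ to first order. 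This yields local uniform dominance.

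I expect the main obstacle to be the rigor of the KYP step and of the nonlinear small-signal argument. The KYP lemma needs a minimal realization; if the given realization is non-minimal, I would pass to a minimal one, which has the same transfer function and whose uncontrollable or unobservable modes are stable. For the small-signal bound, Lemma~\ref{lem:softmax_lipschitz} gives $\|\Delta y\|\le\tfrac12\|\Delta m\|$, and stability gives $\|\Delta m\|_{2,[0,T]}\le\|h\|_\infty\|\Delta p\|_{2,[0,T]}$. Together these show that deviations remain small and proportional, so the linearized passivity describes the nonlinear comparison system near the nominal trajectory.
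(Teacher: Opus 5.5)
Your proposal is correct and follows the paper's proof essentially step for step. It uses the same operating trajectory with $m^*=h(0)\mathbf{1}_n$, the same cancellation of the integrator state giving $\Delta y=S\Delta m$, the same tangent-space reduction to $\tfrac{1}{n}h(s)I_{n-1}$, passivity from positive realness with zero initial deviation, and the same Lipschitz/$\mathcal{H}_\infty$ closing remark; your explicit KYP storage $\tfrac{1}{2n}\delta x_h^\top(P\otimes I_{n-1})\delta x_h$ with $PB_h=C_h^\top$, together with the minimal-realization caveat, makes precise the step where the paper only asserts that a quadratic storage function exists.

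One caveat applies equally to the paper: the closing remark bounds only the size of $\Delta y$, not the sign of the nonlinear reward gap, and $\Delta z_{\mathrm{R}}(t)=\int_0^t\Delta p(\tau)\,d\tau$ need not stay small. What is actually established is passivity of the linearization, which is exactly the sense in which the theorem is stated.
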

\begin{proof}
The argument follows the same steps as the proof of
Theorem~\ref{thm:local_dominance_uniform}. We again view the performance gap as a passivity of the comparison system \eqref{eq:nonlinear_genearalized_comparison}. Let
\(
g(0)\;\coloneqq\;C_h(-A_h)^{-1}B_h
\)
denote the DC gain of the predictor, and consider the uniform operating
trajectory
\begin{align*}
    p^*(t)&=\mathbf{1}_n,\qquad \;\; \; x_h^*=\bigl((-A_h)^{-1}B_h\bigr)\!\otimes\!\mathbf{1}_n,\quad \\
    m^*&=g(0)\,\mathbf{1}_n,\quad
   z_{\mathrm{R}}^*(t)=t\,\mathbf{1}_n,
\end{align*}

for which
$x_{\mathrm{R}}^*(t)=x_{\mathrm{PR}}^*(t)=\tfrac{1}{n}\mathbf{1}_n$ and
$y^*(t)=\mathbf{0}_n$.

Linearizing \eqref{eq:nonlinear_genearalized_comparison} about this trajectory yields a deviation system of the same structure as in
Theorem~\ref{thm:local_dominance_uniform}, except that the low--pass predictor is replaced by the LTI system with state $\Delta x_h$:
\begin{align*}
    \dot{\Delta x_h}&=(A_h\otimes  I_n)\Delta x_h+(B_h \otimes I_n)\Delta p,\\
    \Delta m&=(C_h\otimes I_n)\Delta x_h,\\
     \Delta y &= S \Delta m.
\end{align*}
where $S=\frac{1}{n}I_n-\frac{1}{n^2}\mathbf{1}_n\mathbf{1}_n^\top$. As before, admissible perturbations lie in the tangent space
$T\Delta_n$. Projecting the dynamics onto this space
and using the translation invariance of the softmax, the reduced linearized
input--output map from $\delta p$ to $\delta y$ is $\frac{1}{n} h(s) I_{n-1}$ where $h(s)=C_h(sI_m-A_h)^{-1}B_h$.

By assumption, $g(s)$ is passive and asymptotically stable; hence the reduced
map is passive. Therefore, there exists a quadratic storage function
$V(\delta x_h)$ such that
\[
\int_0^T \delta p(t)^\top\delta y(t)\,dt
\;\ge\;
V(\delta x_h(T))-V(\delta x_h(0)).
\]
With zero initial deviation, this yields
$\int_0^T \delta p^\top\delta y\,dt\ge0$. 

Finally, as in the proof of
Theorem~\ref{thm:local_dominance_uniform}, the reward gap satisfies
\[
\int_0^T p^\top(x_{\mathrm{PR}}-x_{\mathrm{R}})\,dt
=\int_0^T \delta p^\top\delta y\,dt \;\ge\; 0,
\]
and is strictly positive if $h(s)$ is strictly passive.

Local boundedness of the nonlinear deviation dynamics follows directly from the Lipschitz continuity of the softmax (Lemma \ref{lem:softmax_lipschitz}) and the bounded $\mathcal{H}_\infty$ norm of the asymptotically stable predictor $h(s)$, completing the proof.
\end{proof}

\subsection{Global performance comparison via an optimal control formulation}
\label{subsec:performance_comparison_global}

Theorem~\ref{thm:local_dominance_general_predictor} establishes that, in a neighborhood of the uniform operating trajectory
$(p^*=\mathbf{1}_n,\,y^*=\mathbf{0}_n)$, predictive RD equipped with a passive, asymptotically stable predictor, including the first--order low--pass filter (anticipatory RD), uniformly dominates standard RD. We now move beyond this local comparison and study performance dominance at a global level.

In this subsection, we compare the performance of anticipatory RD and  standard RD globally, for arbitrary payoff trajectories. We formulate the cumulative reward gap as an \emph{optimal control problem} and apply Pontryagin’s Maximum Principle to characterize the worst--case (i.e., minimal) performance difference. This analysis yields a global dominance result: anticipatory RD uniformly dominates standard RD across all payoff environments.

\begin{theorem}
Anticipatory RD \eqref{eq:predictive_RD_low_pass} uniformly dominate the standard RD \eqref{eq:RD} for every payoff trajectory. Equivalently,
\[
J(p)\;\coloneqq\; \int_0^T p(t)^\top \bigl(x_{\mathrm{PR}}(t)-x_{\mathrm{R}}(t)\bigr)\,dt \;\ge\; 0
\qquad \forall\,T>0,
\]
under the matched initial conditions \(r(0)=z_{\mathrm{R}}(0)=\mathbf{0}_n\).
\end{theorem}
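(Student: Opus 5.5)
The plan is to treat the cumulative reward gap as the cost of an optimal control problem in which the payoff $p(\cdot)$ is the control. We then show that the infimum over all admissible payoffs is $0$, attained by payoffs in $\operatorname{span}\{\mathbf{1}_n\}$. Since $r(0)=z_{\mathrm{R}}(0)$ and $\dot r=\dot z_{\mathrm{R}}=p$, we have $r\equiv z_{\mathrm{R}}$. The comparison system \eqref{eq:nonlinear_sys_comparison} therefore reduces to the state $(z_{\mathrm{R}},m)\in\mathbb{R}^{2n}$, and the cost is
\[
J(p)=\int_0^T p^\top\bigl(\sigma(z_{\mathrm{R}}+m)-\sigma(z_{\mathrm{R}})\bigr)\,dt .
\]
By shift invariance of $\sigma$, the components of $z_{\mathrm{R}},m,p$ along $\mathbf{1}_n$ do not affect $J$, so we may project everything onto $T\Delta_n$ as in the proof of Theorem~\ref{thm:local_dominance_uniform}. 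We take $m(0)=\mathbf{0}_n$, which gives initial state $(\mathbf{0}_n,\mathbf{0}_n)$.

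\textbf{Step 1: rewrite the cost using the lse structure.} Using $\sigma=\nabla\operatorname{lse}$, we have $\int_0^T p^\top\sigma(z_{\mathrm{R}})\,dt=\operatorname{lse}(z_{\mathrm{R}}(T))-\operatorname{lse}(z_{\mathrm{R}}(0))$. With $z=z_{\mathrm{R}}+m$ and $\dot m=\gamma\lambda p-\lambda m$, we have $\dot z=(1+\gamma\lambda)p-\lambda m$, and hence
\[
(1+\gamma\lambda)\int_0^T p^\top\sigma(z)\,dt=\operatorname{lse}(z(T))-\operatorname{lse}(z(0))+\lambda\int_0^T m^\top\sigma(z)\,dt .
\]
This expresses $J$ as boundary lse terms plus one running term $\int m^\top\sigma(z)\,dt$. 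We then add and subtract $m^\top\sigma(z_{\mathrm{R}})$ and the Bregman quantity $\operatorname{lse}(z_{\mathrm{R}}+m)-\operatorname{lse}(z_{\mathrm{R}})-m^\top\sigma(z_{\mathrm{R}})\ge 0$. Wherever possible, the resulting terms should be brought into the sign-definite forms $m^\top(\sigma(z_{\mathrm{R}}+m)-\sigma(z_{\mathrm{R}}))\ge 0$ (monotonicity of $\sigma$) and Bregman terms. Whatever cannot be closed this way is left to Step 2.

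\textbf{Step 2: Pontryagin analysis.} Introduce costates $(\mu,\nu)$ for $(z_{\mathrm{R}},m)$ and the Hamiltonian
\[
H=p^\top\bigl(\sigma(z_{\mathrm{R}}+m)-\sigma(z_{\mathrm{R}})\bigr)+\mu^\top p+\nu^\top(\gamma\lambda p-\lambda m).
\]
This Hamiltonian is affine in $p$. For the problem to have a finite infimum when $p$ is unconstrained, the switching function $\sigma(z_{\mathrm{R}}+m)-\sigma(z_{\mathrm{R}})+\mu+\gamma\lambda\nu$ must vanish modulo $\mathbf{1}_n$ along extremals, so every extremal arc is singular. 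The costate equations, with transversality $\mu(T)=\nu(T)=0$, should combine with this singular condition and its time derivatives to force $m\in\operatorname{span}\{\mathbf{1}_n\}$ along the arc. On such an arc the integrand is identically zero, which gives the extremal value $J=0$.

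\textbf{Step 3: verification.} Because the Pontryagin conditions are only necessary, we confirm global optimality with a value-function argument. The candidate value function is built from the Step 1 terms, of the form $W(z_{\mathrm{R}},m)=c\bigl[\operatorname{lse}(z_{\mathrm{R}}+m)-\operatorname{lse}(z_{\mathrm{R}})-m^\top\sigma(z_{\mathrm{R}})\bigr]$ plus corrections, with $W\ge 0$ and $W(\mathbf{0}_n,\mathbf{0}_n)=0$. The goal is the dissipation inequality $\dot W\le p^\top y$, which is exactly a storage function certifying passivity of \eqref{eq:nonlinear_sys_comparison}. Integrating it gives $J\ge -W(\mathbf{0}_n,\mathbf{0}_n)=0$ for every $T$.

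I expect the main obstacle to be Step 3, together with the singular-arc analysis in Step 2. The control is unbounded and the cost is affine in $p$, so the Pontryagin analysis is degenerate and gives only necessary conditions. The indefinite cross term $\lambda\int m^\top\sigma(z)\,dt$ from Step 1 must be absorbed into a genuinely nonnegative storage function. My first attempt will be the Bregman-type $W$ above with the constant $c$ tuned (I would try $c=1/(\gamma\lambda)$ and $c=1/(1+\gamma\lambda)$). If that fails, I would add a term $\kappa\,\|m\|^2$, using the $\tfrac12$-Lipschitz bound of Lemma~\ref{lem:softmax_lipschitz} to dominate the residual cross terms.
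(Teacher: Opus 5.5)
Your Step 2 is essentially the paper's whole proof. The paper forms the same Hamiltonian and uses affinity in $p$ to force a singular arc. It differentiates the switching function to get $\psi_2^*=-\frac{1}{\gamma\lambda}\nabla\sigma(z_{\mathrm{R}}^*+m^*)m^*$, then uses $H\equiv 0$ to obtain $m^{*\top}\nabla\sigma(z_{\mathrm{R}}^*+m^*)m^*=0$, hence $m^*\in\operatorname{span}\{\mathbf{1}_n\}$ and $J(p^*)=0$, and declares the global minimum to be zero. You are right that this gives only necessary conditions: it presupposes that a minimizer exists for an unbounded control. So your proof rests entirely on Step 3, and that is where the genuine gap is. No certificate is produced, and the candidates you list cannot work.

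Since $p$ is unconstrained and $\dot W-p^\top y$ is affine in $p$, a $C^1$ storage function must satisfy exactly $\nabla_{z_{\mathrm{R}}}W+\gamma\lambda\nabla_m W=\sigma(z_{\mathrm{R}}+m)-\sigma(z_{\mathrm{R}})$, together with $m^\top\nabla_m W\ge 0$. For $W=cB$, with $B$ your Bregman term, the $p$-coefficient is $c\bigl[(1+\gamma\lambda)(\sigma(z_{\mathrm{R}}+m)-\sigma(z_{\mathrm{R}}))-\nabla\sigma(z_{\mathrm{R}})m\bigr]$. This leaves the following residuals:
\begin{itemize}
\item $c=1/(\gamma\lambda)$ leaves $\frac{1}{\gamma\lambda}p^\top\bigl(\sigma(z_{\mathrm{R}}+m)-\sigma(z_{\mathrm{R}})-\nabla\sigma(z_{\mathrm{R}})m\bigr)$;
\item $c=1/(1+\gamma\lambda)$ leaves $-\frac{1}{1+\gamma\lambda}p^\top\nabla\sigma(z_{\mathrm{R}})m$;
\item a $\kappa\|m\|^2$ correction adds $2\kappa\gamma\lambda\,p^\top m$.
\end{itemize}
These terms are linear in an unbounded $p$, have indefinite sign, and do not cancel one another identically. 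So no Lipschitz estimate or bounded dissipation term can absorb them. Your $c=1/(\gamma\lambda)$ is correct only to second order in $m$, which is why it reproduces the paper's local storage $\frac{1}{2n\gamma\lambda}\|\delta m\|^2$. The root cause is already visible in Step 1: $m^\top\sigma(z_{\mathrm{R}})$ is the wrong comparison term, because $\int m^\top\sigma(z_{\mathrm{R}})\,dt$ is neither a boundary term nor sign-definite.

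The missing idea is to compare at $b\coloneqq z_{\mathrm{R}}-m/(\gamma\lambda)$. Its derivative $\dot b=m/\gamma$ does not involve $p$, so $\lambda\int_0^T m^\top\sigma(b)\,dt=\gamma\lambda\bigl[\operatorname{lse}(b)\bigr]_0^T$ exactly. Inserting this into your Step 1 identity gives the storage function
\[
V=\tfrac{1}{1+\gamma\lambda}\operatorname{lse}(z_{\mathrm{R}}+m)+\tfrac{\gamma\lambda}{1+\gamma\lambda}\operatorname{lse}\bigl(z_{\mathrm{R}}-\tfrac{m}{\gamma\lambda}\bigr)-\operatorname{lse}(z_{\mathrm{R}}).
\]
$V$ is nonnegative by convexity of $\operatorname{lse}$, because $z_{\mathrm{R}}$ is exactly the convex combination of the two arguments with these weights. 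It vanishes whenever $m=\mathbf{0}_n$, and along trajectories
\[
\dot V=p^\top y-\tfrac{\lambda}{1+\gamma\lambda}\,m^\top\bigl(\sigma(z_{\mathrm{R}}+m)-\sigma\bigl(z_{\mathrm{R}}-\tfrac{m}{\gamma\lambda}\bigr)\bigr)\le p^\top y .
\]
The inequality follows from monotonicity of $\sigma=\nabla\operatorname{lse}$, since the two arguments differ by $(1+\frac{1}{\gamma\lambda})m$. Hence $J\ge V(T)-V(0)\ge 0$ for every $T$ whenever $m(0)=\mathbf{0}_n$, for any $z_{\mathrm{R}}(0)$, with no Pontryagin argument needed. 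Among $C^1$ storage functions, the two conditions above force this $V$ up to an additive constant. So this is the certificate your Step 3 would have had to find.
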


\begin{proof}
    With \(r(0)=z_{\mathrm{R}}(0)\), we have \(r(t)=z_{\mathrm{R}}(t)\) for all \(t\), so the comparison dynamics reduce to
    \begin{equation*}
        \begin{aligned}
      \dot{z}_{\mathrm{R}} &= p,\\
      \dot{m} &= \gamma \lambda\, p - \lambda\, m,\\
      y &= \sigma(z_{\mathrm{R}}+m)-\sigma(z_{\mathrm{R}}) \\
        \end{aligned}
    \end{equation*}
    The cumulative reward difference between anticipatory RD and standard RD is
\[
J(p)
=\int_0^T
p(t)^\top\bigl(\sigma(z_{\mathrm{R}}(t)+m(t))-\sigma(z_{\mathrm{R}}(t))\bigr)\,dt.
\]
To characterize the worst--case performance gap, we seek the payoff trajectory
$p(\cdot)$ that minimizes $J(p)$. This leads to the optimal control problem
\[
\begin{aligned}
\min_{p(\cdot)}\quad
&J(p)
=\int_{0}^{T}
p(t)^\top\bigl(\sigma(z_{\mathrm{R}}(t)+m(t))-\sigma(z_{\mathrm{R}}(t))\bigr)\,dt,\\
\text{s.t.}\quad
&\dot z_{\mathrm{R}}(t)=p(t),\qquad z_{\mathrm{R}}(0)=\mathbf{0}_n,\\
&\dot m(t)=\gamma\lambda\,p(t)-\lambda\,m(t),\qquad m(0)=\mathbf{0}_n,
\end{aligned}
\]
with free terminal state. For notational brevity, we suppress the explicit time
dependence below.

Let $p^*$ denote an optimal payoff trajectory with corresponding state
$(z_{\mathrm{R}}^*,m^*)$. Define the Hamiltonian
\[
\begin{aligned}
H(p,z_{\mathrm{R}},m,\psi_1,\psi_2)
&=\psi_1^\top p+\psi_2^\top(\gamma\lambda\,p-\lambda\,m)\\
&\quad -p^\top\bigl(\sigma(z_{\mathrm{R}}+m)-\sigma(z_{\mathrm{R}})\bigr),
\end{aligned}
\]
where $\psi_1,\psi_2$ are the costate variables. By Pontryagin’s Maximum
Principle, the optimal quadruple
$(p^*,z_{\mathrm{R}}^*,m^*,\psi_1^*,\psi_2^*)$ satisfies the canonical equations
\begin{equation}
\label{eq:canonical_equations_global}
\begin{aligned}
\dot z_{\mathrm{R}}^* &= p^*,\\
\dot m^* &= \gamma\lambda\,p^*-\lambda\,m^*,\\
\dot\psi_1^*
&=\nabla\sigma(z_{\mathrm{R}}^*+m^*)\,p^*
-\nabla\sigma(z_{\mathrm{R}}^*)\,p^*,\\
\dot\psi_2^*
&=\nabla\sigma(z_{\mathrm{R}}^*+m^*)\,p^*
+\lambda\,\psi_2^*,
\end{aligned}
\end{equation}
with transversality conditions $\psi_1^*(T)=\psi_2^*(T)=\mathbf{0}_n$. Since the Hamiltonian is affine in $p$, finiteness of the minimizing control
requires that the switching function vanish identically:
\[
\phi(t)
=\psi_1^*+\gamma\lambda\,\psi_2^*
-\sigma(z_{\mathrm{R}}^*+m^*)+\sigma(z_{\mathrm{R}}^*)
=\mathbf{0}_n,
\quad \forall t.
\]
Thus, the optimal control lies on a singular arc. Along the optimal trajectory, the Hamiltonian satisfies:
\begin{equation}\label{eq:Hzero}
H\bigl(p^*,z_{\mathrm{R}}^*,m^*,\psi_1^*,\psi_2^*\bigr)=0
\quad\text{for all }t\in[0,T].
\end{equation}
Differentiating $\phi(t)\equiv0$ and using
\eqref{eq:canonical_equations_global} yields
\[
\psi_2^*
=-\frac{1}{\gamma\lambda}\,
\nabla\sigma(z_{\mathrm{R}}^*+m^*)\,m^*.
\]
Using \eqref{eq:Hzero} and \(\phi\equiv 0\),
\[
0 \;=\; {p^*}^\top\underbrace{\bigl(\psi_1^* + \gamma\lambda\,\psi_2^*
- \sigma(z_{\mathrm{R}}^*+m^*) + \sigma(z_{\mathrm{R}}^*)\bigr)}_{=\;0}
- \lambda\,{\psi_2^*}^\top m^*
\]
 implies that \({\psi_2^*}^\top m^* = 0\). Substitute \(\psi_2^*\) to get
\[
m^{*\top}\,\nabla\sigma(z_{\mathrm{R}}^*+m^*)\,m^* \;=\; 0
\quad\text{for all }t>0.
\]
Since $\nabla\sigma(\cdot)\succeq0$ and
$\ker\nabla\sigma(u)=\operatorname{span}\{\mathbf{1}_n\}$, this equality holds if
and only if
\[
m^*(t)=\gamma(t)\,\mathbf{1}_n
\quad \forall t.
\]
From $\dot m^*=\gamma\lambda\,p^*-\lambda m^*$, it follows that
\[
p^*(t)
=\Bigl(\frac{\gamma(t)}{\gamma}
+\frac{\dot\gamma(t)}{\gamma\lambda}\Bigr)\mathbf{1}_n
\in\operatorname{span}\{\mathbf{1}_n\},
\quad \forall t.
\]
Therefore, the unique payoff trajectory minimizing $J(p)$ is one that lies
entirely in $\operatorname{span}\{\mathbf{1}_n\}$. For such trajectories,
\[
\begin{aligned}
J(p^*)
&=\int_0^T
{p^*}(t)^\top
\bigl(\sigma(z_{\mathrm{R}}^*(t)+\gamma(t)\mathbf{1}_n)
-\sigma(z_{\mathrm{R}}^*(t))\bigr)\,dt=0,
\end{aligned}
\]
since $\sigma(v+\alpha\mathbf{1}_n)=\sigma(v)$.
By Proposition~\ref{prop:fixed_payoff_PRD}, anticipatory RD uniformly dominates
standard RD for any fixed payoff. Hence, the global minimum of $J(p)$ over all
admissible payoff trajectories equals zero, implying
\[
J(p)\ge0
\quad \text{for all payoff trajectories } p(\cdot).
\]
This establishes global uniform dominance.
\end{proof}

The global dominance result is fully consistent with the preceding analyses.
First, we showed that for any fixed payoff $p(t)\equiv\bar p$ with
$\bar p\notin\operatorname{span}\{\mathbf{1}_n\}$, anticipatory RD \emph{strictly} and \emph{uniformly} dominates standard RD. Second, in
Subsection~\ref{subsec:performance_comparison_local}, we established a local
dominance result in a neighborhood of the uniform operating trajectory
$(p^*=\mathbf{1}_n,\;
x_{\mathrm{R}}^*=\tfrac{1}{n}\mathbf{1}_n,\;
x_{\mathrm{PR}}^*=\tfrac{1}{n}\mathbf{1}_n)$.
The global optimal--control analysis above unifies these findings by showing
that no payoff trajectory can produce a negative cumulative reward gap, thereby
extending both the fixed--payoff and local dominance results to arbitrary payoff
environments.

\subsection{Numerical experiment}

\begin{example}\label{ex:global_performance_comparison}
Consider the payoff environment
\[
p(t)
= v_1 \sin(t) + v_2 \cos(2t) + v_3 \sin(3t),
\]
where
\(
v_1 = [0.3,\;0.5,\;-0.7,\;1,\;0.8]^\top,\quad
v_2 = [0.7,\;2,\;0.4,\;1.2,\;2]^\top,\quad
v_3 = [0.8,\;1.4,\;-2.1,\;2,\;0.8]^\top .
\)
Figure~\ref{fig:global_comparison_RD_anti_RD} illustrates the cumulative average reward obtained by the anticipatory RD and the standard RD. As predicted by the global dominance result, the anticipatory RD achieves a strictly larger accumulated reward than the standard RD for all horizons $T>0$.
\end{example}

\begin{figure}[H]
    \centering
    \includegraphics[width=0.7\linewidth]{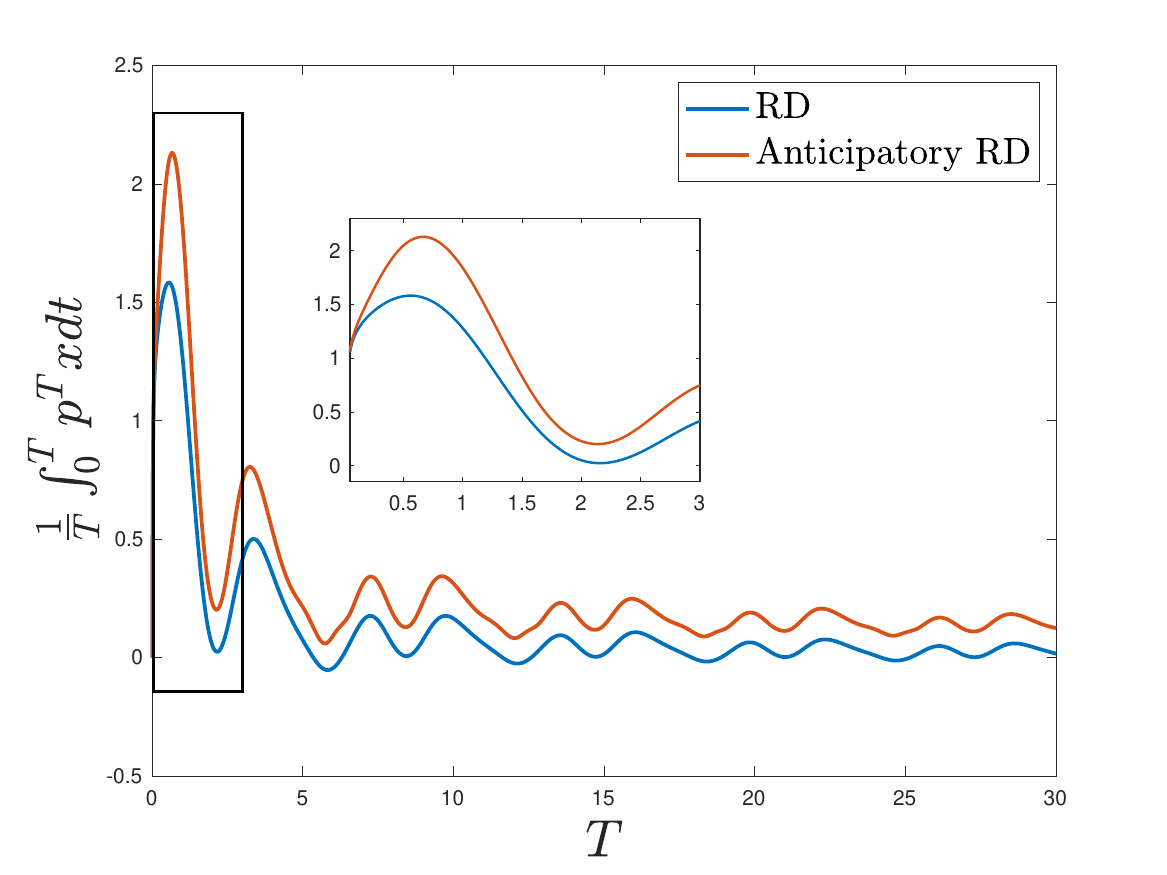}
    \caption{Performance comparison between anticipatory RD and standard RD for the payoff environment in Example~\ref{ex:global_performance_comparison}.}
    \label{fig:global_comparison_RD_anti_RD}
\end{figure}

\section{Conclusion and future work}

This paper studied performance dominance among learning dynamics from a control--theoretic perspective, with particular emphasis on the existence of ``free--lunch'' phenomena among no--regret algorithms. We showed that regret guarantees alone are insufficient to assess learning performance and introduced
a systematic framework for comparing learning rules through cumulative reward. By modeling payoff-based higher-order replicator dynamics as a cascade interconnection between a linear time--invariant (LTI) system and the softmax mapping, we linked the asymptotic average reward of the learning dynamics
directly to the frequency response of the underlying LTI transfer function.

Within this framework, we established that oracle RD uniformly dominates standard replicator dynamics for all payoff trajectories. We further showed that predictive exponential RD equipped with a first--order low--pass predictor uniformly dominates its standard variant. By formulating performance comparison as a passivity problem, we demonstrated that predictive RD  with any passive, asymptotically stable predictor locally uniformly dominates standard RD. Finally, by casting the comparison between anticipatory and standard replicator dynamics as an optimal--control problem, we showed that the minimal performance gap is zero. This result implies global uniform dominance of anticipatory replicator dynamics for arbitrary payoff environments and answers the central question of this paper: a learner can indeed regret not using a no--regret algorithm.

Several directions for future research naturally follow. First, it remains open to establish a global performance dominance result for predictive replicator dynamics with general passive  predictors. Second, an important and largely unexplored question is whether ``free-lunch'' phenomena persist in strategic
multi-agent settings. In particular, when agents interact through a game and employ no--regret learning rules, can an agent improve its long-run performance by selecting a specific no--regret algorithm given knowledge of the opponents' learning dynamics across general game environments?
Addressing these questions would further clarify when and how a learner may regret using one no--regret algorithm over another in strategic settings.

\section{references}

\bibliographystyle{ieeetr}

\bibliography{refs.bib}

\end{document}